\documentclass[thm-restate,cleveref]{lipics-v2021}
\usepackage[utf8]{inputenc}
\usepackage{amsmath}
\usepackage{amsfonts}
\usepackage{amsthm}
\usepackage{amsmath}
\usepackage{makecell}
\usepackage{xspace}
\usepackage{caption}
\usepackage{imakeidx}
\usepackage{thmtools}
\usepackage{makecell}
\usepackage{todonotes}
\usepackage{algorithm}
\usepackage{algpseudocode}
\usepackage{booktabs}
\usepackage{ragged2e}
\usepackage{xspace}
\usepackage{adjustbox}
\usepackage{changepage}
\usepackage{todonotes}
\usepackage{tcolorbox}
\usepackage{graphicx,caption}
\usepackage{paralist}
\usepackage[nocompress]{cite}
\usepackage{tabularray}

\usepackage{multirow}

\include{tikzsetup}

\nolinenumbers
\hideLIPIcs

\newcommand{\Pa}{P_\alpha}
\newcommand{\Qa}{Q_\alpha}

\newcommand{\dist}{\text{dist} }

\newcommand{\Fd}{d_F} 
\newcommand{\Fdd}{\ensuremath{d_{dF}}} 
\newcommand{\Fg}{d_{dF}^G} 
\newcommand{\band}{\mathcal{B}}
\newcommand{\bR}{\mathbb{R}}
\newcommand{\FgPQ}{\ensuremath{\Fg(P,Q)}}
\newcommand{\FgPQa}{\FgPQ}
\newcommand{\FddPQa}{\ensuremath{d_{dF}(\Pa,\Qa)}}
\newcommand{\eps}{\varepsilon}
\renewcommand{\d}{\operatorname{dist}}

\newtheorem*{hypothesis}{Hypothesis}

\usepackage{amsmath}

\newcommand{\Frechet}{Fréchet }

\titlerunning{Fréchet Distance in $d$-Dimensional Grid Graphs}
\title{  \fontsize{16pt}{20pt}\selectfont Near-tight Bounds for Computing the Fréchet Distance in $d$-Dimensional Grid Graphs and the Implications for $\lambda$-low Dense Curves}

\author{Jacobus Conradi}{Department of Computer Science, University of Copenhagen, Denmark}{jaco@di.ku.dk}{https://orcid.org/0000-0002-8259-1187}{}

\author{Ivor van der Hoog}{IT University of Copenhagen, Denmark}{ivva@itu.dk}{
https://orcid.org/0009-0006-2624-0231}{}

\author{Frederikke Uldahl}{IT University of Copenhagen, Denmark}{freu@itu.dk}{https://orcid.org/0009-0003-2279-9239}{}

\author{Eva Rotenberg}{IT University of Copenhagen, Denmark}{erot@itu.dk}{
https://orcid.org/0000-0001-5853-7909}{}

\authorrunning{J. Conradi, I. van der Hoog, F. Uldahl, E. Rotenberg}

\Copyright{J. Conradi, I. van der Hoog, F. Uldahl, E. Rotenberg}

\funding{This research is supported by the VILLUM Foundation grant (VIL37507) ``Efficient Recomputations for Changeful Problems'', the Danmarks Frie Forskningsfond Case no.~4251-00004B, and the Carlsberg Foundation, grant CF24-1929.}

\acknowledgements{We thank Sampson Wong for pointing us to the implications of our results for simple paths to $\lambda$-low density curves. }

\ccsdesc[500]{Theory of computation~Computational geometry}

\keywords{Fréchet Distance, Fine-grained Complexity, Grid Graphs}

\begin{document}

\maketitle

\begin{abstract}
The Fréchet distance
is a popular distance measure between trajectories or curves in space, or between walks in graphs. We study computing the Fréchet distance between walks in the $d$-dimensional grid graphs, i.e. $\mathbb{Z}^d$ where points share an edge if they differ by one in one coordinate.

We give an algorithm, that for two simple paths on $n$ vertices, $(1+\eps)$-approximates the Fréchet distance in time $\widetilde{O}((\frac{n}{\varepsilon})^{2-2/d} +n)$. We complement this by a near-matching fine-grained lower bound:  for constant dimensions $d \geq 3$, there is no $O((\varepsilon^{2/d}(\frac{n}{\varepsilon})^{2-2/d})^{1-\delta})$ algorithm for any $\delta>0$ unless the Orthogonal Vector Hypothesis fails.  Thus, our results are tight up to a factor $\eps^{2/d}$ and $\log(n)$-factors. 
We extend our results to imbalanced lower and upper bounds, where the curves have $n$ and $m$ vertices respectively, and also obtain near-tight bounds.

Driemel, Har-Peled and Wenk [DCG'12] studied \emph{realistic assumptions} for curves to speed up Fréchet distance computation.  One of these assumptions is $\lambda$-low density and they can compute a $(1+\varepsilon)$-approximation between $\lambda$-low dense curves in time $\tilde{O}( \varepsilon^{-2} \lambda^2 n^{2(1-1/d)})$.
By adapting our lower bound, 
we 
show that their algorithm has a tight dependency on $n$ and a tight dependency on $\varepsilon$ as $d$ goes to infinity. A gap remains in terms of $\lambda$. 
\end{abstract}
\setcounter{page}{0}

\begin{figure}[H]
    \centering
    \includegraphics[width = 0.9025\textwidth]{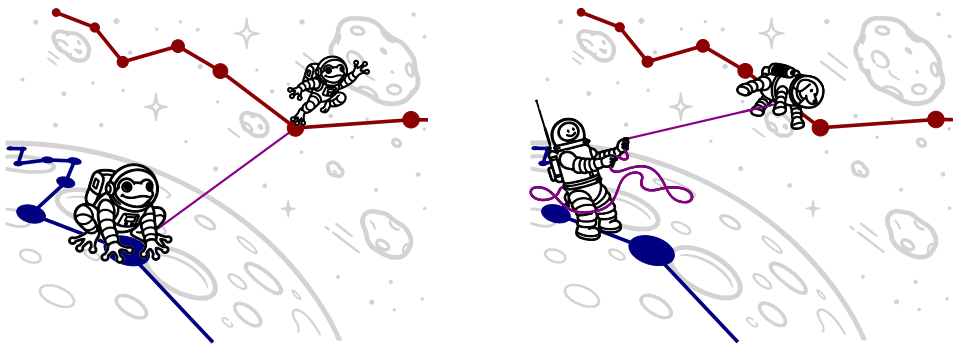}
    \caption{A hand-drawn illustration of the discrete and continuous Fréchet distance in $3$D.}
    \label{fig:frechet_distance}
\end{figure}

\newpage

\section{Introduction}
The Fréchet distance is a widely recognised measure of similarity between two curves or trajectories. Conceptually, it can be considered the shortest leash length needed to connect a human walking along one curve and a dog walking along the other, with both moving forward without backtracking (see Figure~\ref{fig:frechet_distance}).
The Fréchet distance is used in various real-life applications, including 
signature verification\cite{791205}, 
the study of coastline retrievals\cite{mascret2006coastline}, map-matching \cite{1644335},
protein structures \cite{jiang2008protein},
air traffic \cite{bombelli2017strategic},
and trajectory analysis and clustering  \cite{buchin2011detecting, brakatsoulas2005map, buchin2017clustering,konzack2017visual,su2020survey, kenefic2014track, driemel2016clustering, agarwal2018subtrajectory, sung2012trajectory,zhou2008bag, gudmundsson2012motifs, buchin2020improved}.
 Alt and Godau\cite{alt1995computing} show a $O(nm \log(n + m))$-time algorithm for computing the Fréchet distance between two polygonal curves with $n$ and $m$ vertices. Since then, there have been polylogarithmic-time improvements \cite{buchin2017four,doi:10.1137/1.9781611978322.173}, but there is strong evidence that significantly faster algorithms cannot exist. Bringmann \cite{DBLP:conf/focs/Bringmann14} proved that, under the Strong Exponential Time Hypothesis, no strongly subquadratic algorithm (i.e., with running time $O( (nm)^{1-\delta})$ for any $\delta > 0$) exists to compute a 1.001 approximation of the Fréchet distance between curves of complexity $n$ and $m$. 
Buchin, Ophelders, and Speckmann~\cite{buchin2019seth} rule out a $(3-\eps)$-approximation in time  $O((n)^{2-\delta})$ between two $n$-vertex curves in $\bR^1$ and recently, Blank~\cite{blank2026fr} rules out a $(1+\sqrt{2}-\eps)$-approximation (resp. $(3-\eps)$-approximation) of the Fréchet distance between curves of complexity $n$ and $m$ in $\bR^2$ under the $\ell_2$-norm (resp. $\ell_\infty$-norm) in time $O((nm)^{1-\delta})$. 
Cheng, Huang, and Zhang~\cite{ChengHuangZhang2025} obtain a randomised $(7+\varepsilon)$-approximation in $\tilde{O}(nm^{0.99})$ time.

\subparagraph{Restricted curve classes.}
To obtain strongly subquadratic-time algorithms, one therefore has to restrict the input.
Gudmundsson, Mirzanezhad, Mohades, and Wenk \cite{gudmundsson2019fast} consider the special case where the input curves consist exclusively of sufficiently long edges relative to their Fréchet distance, and gave a near-linear time exact algorithm. Another way of obtaining subquadratic results is to impose so-called \emph{realistic assumptions} on the input --- I.e., to restrict the curve to not have geometric traits that are considered unrealistic. Examples include $\kappa$-straight and $\kappa$-bounded curves~\cite{alt2004comparison}, $\lambda$-low dense curves~\cite{SCHWARZKOPF1996121}, and $c$-packed curves \cite{DriemelHW12}. 
Driemel, Har-Peled and Wenk~\cite{DriemelHW12} showed, for these classes of curves, a near-linear time algorithm to compute a $(1+\eps)$-approximation of the Fréchet distance. 
For $c$-packed curves (and therefore $c$-straight and $c$-bounded curves) they can decide whether the Fréchet distance between two curves is at most $\Delta$ in time $O(\frac{c}{\eps}n)$. Bringmann and K\"{u}nnemann~\cite{bringmann17cpacked} improved the running time to $O(\frac{c}{\sqrt{\eps}}n)$, and gave a matching lower bound conditioned on the Orthogonal Vector Hypothesis. 
Gudmundsson, Mai, and Wong~\cite{gudmundsson2024OneCurve} showed that it suffices to assume that one curve is $c$-packed and show a $\tilde{O}(c^3 \eps^{-8}(n+m))$-time algorithm.
Conradi, van der Hoog, van der Horst, and Ophelders~\cite{Conradi2026OneCurve} improve this running time to $O(\frac{c}{\eps}n)$.

\subparagraph{$\lambda$-low dense curves.}
The realism assumption of $\lambda$-low density is of particular relevance to this paper.
A collection of edges in $\bR^d$ is \emph{$\lambda$-low-density} if for any ball $B$ in $\bR^d$, the number of edges of length at least $\mathrm{radius}(B)$ that intersect $B$ is at most $\lambda$.
It is a popular realism assumption to speed up computations~\cite{SCHWARZKOPF1996121, DriemelHW12, Gudmundsson2024Lambda-Low-Density, Buchin2024MapMatching, Gudmundsson2026Well-separated, HarPeled2017ApproximationDensity, deBerg2012LowDensity} with one compelling feature: planar grid graphs are $c$-packed for $c \in \Theta(n)$ but they are $O(1)$-low dense.
Driemel, Har-Peled and Wenk~\cite{DriemelHW12} show a  $\tilde{O}( \varepsilon^{-2} \lambda^2 n^{2(1-1/d)})$-time algorithm to compute a $(1+\eps)$-approximation of the Fréchet distance between two $n$-vertex $\lambda$-low dense curves in $\bR^d$, which makes it the only realism assumption discussed in \cite{DriemelHW12} where the running time exponent depends on $d$. 

\subparagraph{Fréchet distance in grid graphs. }
In this paper, we show near-tight bounds for approximating the discrete and continuous Fréchet distance in $d$-dimensional grid graphs.

\newpage
Driemel, van der Hoog, and Rotenberg \cite{driemel_et_al:LIPIcs.SoCG.2022.36} first studied the Fréchet distance in graphs. Their input consists of a weighted graph $G$, where curves are paths in the graph (walks which visit no vertex twice). The distance between any two vertices is the length of the shortest path connecting them.
They show a conditional quadratic-time lower bound on computing the Fréchet distance between paths in a weighted graph, even when the graph is planar.
Van der Hoog, van der Horst, Rotenberg and Wulf \cite{vanderhoog_et_al:LIPIcs.ESA.2025.24} extend this lower bound by excluding subquadratic 1.25 approximation algorithms for two disjoint paths in an \emph{unweighted} planar graph. 
They also identified a setting in which faster exact algorithms exist: when $G$ is a regular planar tiling, e.g., a planar grid graph, and $P$ and $Q$ are paths in $G$ of length $n$ and $m$, then the exact Fréchet distance can be computed in time $\tilde{O}((n+m)^{1.5})$.

We note that, by subdividing each edge of the grid graph into $3$ edges of equal length, any path in a grid graph becomes $O(1)$-low dense without asymptotically increasing the curve complexity. Thus, any lower bound for $d$-dimensional grid graphs is also a lower bound for $O(1)$-low dense curves. 
Moreover, we can define for any grid graph a $\lambda$-path as any path that traverses each vertex in the grid at most $\lambda$ times. By the same analysis, any lower bound for $\lambda$-paths in $d$-dimensional grid graphs is also a lower bound for $\lambda$-low dense curves in $\bR^d$. \vspace{-0.1cm}

\subparagraph{Contribution.}
We provide near-tight bounds for computing $(1+\eps)$-approximations of the Fréchet distance in $d$-grid graphs. 
Our algorithmic input consists of two $\lambda$-paths  $P$ and $Q$ in the $d$-dimensional grid graph  of complexities $n$ and $m$, with $n \geq m$.
One may either consider the discrete Fréchet distance under the shortest path metric, or the continuous
Fréchet distance under the $\ell_1$-norm. Our lower bounds are conditioned on the Orthogonal Vectors Hypothesis (OVH), and thus also hold unless the Strong Exponential Time Hypothesis fails.

We first provide some context for our results:
we show that computing an $\alpha$-approximation in time $O( \left(f(n,m)\right)^{1-\delta})$ with $\delta > 0$ contradicts the Orthogonal Vectors Hypothesis. 
A lower bound is \emph{balanced} if it requires that $n = m$ and \emph{imbalanced} otherwise~\cite{blank2026fr}. 
Imbalanced bounds are typically more difficult to obtain. 
For example, for one-dimensional curves, no imbalanced lower bound is known~\cite{buchin2019seth, blank2026fr}, and until recently~\cite{blank2026fr}, the best imbalanced lower bound in 2D only ruled out a subquadratic $1.001$-approximation~\cite{DBLP:conf/focs/Bringmann14}, whereas the best balanced bound rules out a subquadratic $3$-approximation~\cite{buchin2019seth}.
Our first result is a balanced lower bound for dimensions $d \geq 3$, and an imbalanced lower bound for dimensions $d \geq 4$ that excludes strongly subquadratic-time exact algorithms (Theorem~\ref{thm:exact-hardness}).
This complements the $O(n^{1.5})$-time exact 2D algorithm from~\cite{vanderhoog_et_al:LIPIcs.ESA.2025.24}, showing that similar techniques cannot extend to higher dimensions.

Our full set of results is shown in Table~\ref{tab:results}.
In terms of lower bounds, we  also provide balanced and imbalanced lower bounds that exclude $(1+\eps)$-approximations with a lower bound function that depends on $\lambda$ and $\eps$.
Our bounds imply that the $\tilde{O}( \varepsilon^{-2} \lambda^2 n^{2(1-1/d)})$-time algorithm for $\lambda$-low dense $n$-vertex curves from~\cite{DriemelHW12} has a tight dependency on $n$, and an almost-tight dependency on $\eps$.
For upper bounds, we show in  Section~\ref{sec:upperbounds} an approximation algorithm for the Fréchet distance between $\lambda$-paths in $d$-dimensional grid graph when $d \geq 2$. We can match the lower bound for $n$ \emph{and} $\lambda$ as we provide a $(1+\eps)$-approximation algorithm for the discrete Fréchet distance that runs in time 
$\tilde{O}\!\left( \lambda^{2/d} (\frac{n}{\eps})^{2(1-1/d)} +n\right)$ and 
$\tilde{O}\!\left( \eps^{2/d}\lambda^{2/d} m(\frac{n}{\eps^2})^{(1-2/d)} +n\right)$ time (\Cref{thm:Upperbound}).
We can obtain an imbalanced upper bound at no overhead. 
Our balanced results are  tight in $\lambda$ and $n$ and near-tight in $\eps$.

Finally, we can strengthen our upper bounds by considering the continuous Fréchet distance under the $\ell_1$ norm. Since all edges in a $\lambda$-path have unit length, we can immediately obtain a $(1+\eps)$-approximation of the continuous Fréchet distance under any $\ell_p$ norm at a factor $O(\eps^{-2})$ overhead by subdividing each edge $O(\frac{1}{\eps})$ times.
However, we show that our techniques are slightly stronger as we can compute a $(1+\eps)$-approximation of the continuous Fréchet distance under the $\ell_1$ or $\ell_\infty$ norm  between $\lambda$-paths at no overhead (\Cref{thm:upperbound_any_norm}).

\begin{table}
    \centering
    \begin{tblr}{
    colspec = {|c|c|lc|lc|},
    colsep  = 3pt,   
    rowsep  = 2pt,   
    cell{2}{2} = {gray!30},
    cell{2}{3} = {gray!30},
    cell{2}{4} = {gray!30},
    cell{2}{5} = {gray!30},
    cell{2}{6} = {gray!30},
    cell{5}{2} = {gray!30},
    cell{5}{3} = {gray!30},
    cell{5}{4} = {gray!30},
    cell{5}{5} = {gray!30},
    cell{5}{6} = {gray!30},
    cell{8}{2} = {gray!30},
    cell{8}{3} = {gray!30},
    cell{8}{4} = {gray!30},
    cell{8}{5} = {gray!30},
    cell{8}{6} = {gray!30},
    %
    %
    cell{3}{3} = {magenta!30},
    cell{3}{4} = {magenta!30},
    cell{3}{5} = {magenta!30},
    cell{3}{6} = {magenta!30},
    cell{6}{3} = {magenta!30},
    cell{6}{4} = {magenta!30},
    cell{6}{5} = {magenta!30},
    cell{6}{6} = {magenta!30},
    cell{9}{3} = {magenta!30},
    cell{9}{4} = {magenta!30},
    cell{9}{5} = {magenta!30},
    cell{9}{6} = {magenta!30},
    %
    %
  }
        \hline
         $d$& Type & Lower Bound &  &Upper Bound  & \\
         \hline
         \hline
         \SetCell[r=3]{c} $2$ & exact &$\Omega(n)$ &  & $\tilde{O}(n^{1.5})$ & \cite{vanderhoog_et_al:LIPIcs.ESA.2025.24} \\
         \cline{2-6}
         & \SetCell[r=2]{c} approx. &$\Omega(n)$ &  & $\tilde{O}\left( \frac{\lambda n}{\eps}\right)$ & Thm.~\ref{thm:Upperbound}\\
         &  &$\Omega(n)$ &  & $\tilde{O}\left(\frac{\lambda m}{\eps}+n\right)$ & Thm.~\ref{thm:Upperbound}\\
         \hline
         \hline
         \SetCell[r=3]{c} $3$& exact &$\nexists\,\,  O(\min(n,m)^{2-\delta})$ & Thm.~\ref{thm:exact-hardness} & $O(nm)$ & \cite{eitermannila94} \\
         \cline{2-6}
         & \SetCell[r=2]{c} approx. &$\nexists\,\, O\!\left( \left(  \frac{\lambda^{2/3}n^{4/3}}{\varepsilon^{2/3}}  \right)^{1-\delta} \right)$ & Thm.~\ref{thm:3up-approximation-hardness} & $\tilde{O}\left( \frac{\lambda^{2/3} n^{4/3}}{\eps^{4/3}}\right)$ & Thm.~\ref{thm:Upperbound}\\
         &  &$\Omega(n)$ &  & $\tilde{O}\left( \frac{\lambda^{2/3} m n^{1/3}}{\eps^{4/3}}+n\right)$ & Thm.~\ref{thm:Upperbound}\\
         \hline
         \hline
         \SetCell[r=3]{c} $\geq 4$& exact & $\nexists \,\, O((nm)^{1-\delta})$ & Thm.~\ref{thm:exact-hardness}  & $O(nm)$ & \cite{eitermannila94} \\
         \cline{2-6}
         & \SetCell[r=2]{c} approx. &  $\nexists\,\, O\!\left( \left(\frac{\lambda^{2/d}n^{2-2/d}}{\varepsilon^{2-4/d}}  \right)^{1-\delta} \right)$ & Thm.~\ref{thm:3up-approximation-hardness}  & $\tilde{O}\left( \frac{\lambda^{2/d} n^{2-2/d}}{\eps^{2-2/d}}\right)$ & Thm.~\ref{thm:Upperbound}\\
         &  &$\nexists\,\, O\!\left(  \left( \frac{\lambda^{2/(d-1)} m n^{1-2/(d-1)}}{\varepsilon^{2-4/(d-1)}} \right)^{1-\delta} \right)$ & Thm.~\ref{thm:4up-approximation-hardness} & $\tilde{O}\left( \frac{\lambda^{2/d} m n^{1-2/d}}{\eps^{2-2/d}}+n\right)$ & Thm.~\ref{thm:Upperbound}\\
         \hline
    \end{tblr}
    \caption{Our table of results, assuming that $m\leq n$ and that the dimension $d$ is constant. Gray rows consider exact algorithms and the other rows consider $(1+\eps)$-approximations.
    Pink rows show balanced bounds where, for readability, we assume $n = m$.
    White rows show imbalanced bounds.}
    \label{tab:results}
\end{table}

\section{Preliminaries}
A $d$-dimensional polygonal curve $P$ is defined by a point sequence $(p_1, \ldots, p_n) \subset \bR^d$ and forms a piecewise-linear function $P : [1,n] \to \bR^d$ where for $t \in [i, i+1]$ it is defined by $P(t) = p_i + (t - i)(p_{i+1} - p_i)$. Let $P:[1,n]\to\bR^d$ and $Q:[1,m]\to\bR^d$ be two $d$-dimensional polygonal curves, where we assume $m \leq n$.
A pair of functions $(\alpha,\beta)$ is called a \textit{monotone traversal} if
$\alpha:[0,1]\to[1,n]$ and $\beta: [0,1] \to[1,m]$ are surjective and non-decreasing.
If $\mathcal{R}$ is the set of all monotone traversals then the \textit{continuous} \Frechet distance $\Fd(P,Q)$ is defined as
\[
\Fd(P,Q)
= \min_{(\alpha,\beta)\in\mathcal{R}}
\ \max_{t\in[0,1]}
\ \|P(\alpha(t))-Q(\beta(t))\| .
\]

A sequence $(a_i,b_i)_{i\in[k]}$ of pairs of indices is a \textit{monotone walk}, if for all $i\in[k-1]$ we have $a_{i+1}\in \{a_i,a_i+1\}$ and $b_{i+1}\in \{b_i,b_i+1\}$. Let $\mathbb{F}_{n,m}$ be the set of all monotone walks from $(1,1)$ to $(n,m)$. The \textit{discrete} Fréchet distance $\Fdd(P,Q)$ between $ P $ and $ Q $ is defined as:
\[\Fdd(P,Q) = \min_{F \in \mathbb{F}_{n,m}} \max_{(i, j) \in F} \|p_i- q_j\|.\]

\noindent
We study the discrete and continuous Fréchet distance between $\lambda$-paths in grid graphs.

\subparagraph{Fréchet distance in graphs.}
The infinite $d$-dimensional grid graph is the graph $ G = (V, E) $, where $ V = \mathbb{Z}^d $ is the set of all integer lattice points and $ E = \{ (p,q) \in V \mid \|p - q\|_1 = 1 \} $. For any $p, q \in V$ let $d_G(p, q)$ denote the length of the shortest path in $G$ between $p$ and $ q$.

Let $ G = (V, E) $ be an infinite grid graph, and let $ P = (p_1, p_2, \dots, p_n ) $ and $Q = (q_1, q_2, \dots, q_m ) $ be two walks in $G$. 
We focus on the \textit{discrete} Fréchet distance $\FgPQ$ between $P$ and $Q$ in $G$, which is defined as:
\[\FgPQ = \min_{F \in \mathbb{F}_{n,m}} \max_{(i, j) \in F} d_G(p_i, q_j).\]

\noindent
As there exist quadratic-time lower bounds for computing the Fréchet distance between one-dimensional walks~\cite{buchin2019seth, blank2026fr}, we restrict $P$ and $Q$ to $\lambda$-paths which are defined as follows:

\begin{definition}
    A $\lambda$-path is a walk  in $G$  that visits each vertex at most $\lambda$ times.
\end{definition}

\subparagraph{Generalisations.}
The above definition is but one of many ways to study the Fréchet distance in regular tessellations.
One immediate generalisation is to consider the discrete Fréchet distance between $\lambda$-paths in any other regular $d$-dimensional tiling. For dimensions $d \geq 5$ the only regular tessellation is the $d$-dimensional grid, sometimes referred to as the \emph{cubic honeycomb}~\cite{coxeter1973regular}, and so for dimensions $d \geq 5$ our results are complete.
Alternatively, one can interpret the $\lambda$-paths $P$ and $Q$ as $d$-dimensional continuous curves under any $\ell_p$-norm, and consider the continuous Fréchet distance.
We note in Appendix~\ref{app:generalisations} that our results apply to any such formulation, but for ease of exposition we restrict the main body to the discrete Fréchet distance under the shortest path distance, which coincides with the $\ell_1$-metric. 


\subparagraph{OVH.} 
Given two sets of vectors $U,W \subseteq \{0,1\}^b$ of $N$ and $M$ vectors, the Orthogonal Vector problem asks whether there exists vectors $u \in U$ and $w\in W$ such that $u$ is orthogonal to $w$, i.e., $u\cdot w=0$. If there exists a pair of orthogonal vectors, we will call it a YES instance. 
\begin{hypothesis}[Orthogonal Vector Hypothesis (OVH) \cite{DBLP:journals/tcs/Williams05}]
For all $\delta > 0$, there exist constants $\omega$ and $1 > \gamma > 0$ such that the Orthogonal Vector problem for vectors of dimension $b = \omega \log N$ and $M = N^\gamma$ cannot be solved in $O\big((MN)^{1-\delta}\big)$ time.
\end{hypothesis}

\noindent
The Orthogonal Vector Hypothesis is implied by the Strong Exponential Time Hypothesis (SETH) \cite{williams2014finding}. The lower bound we present is therefore also conditioned on SETH.

\section{Lower Bounds}

We provide balanced and imbalanced lower bounds for computing the Fréchet distance for $\lambda$-paths in the $d$-dimensional grid graph. 
We make use of one of two existing lower bounds:
\begin{itemize}
    \item A balanced lower bound for the continuous Fréchet distance in $\bR^1$, which states that a $(3-\eps)$-approximation for complexity $n$ curves in $O(n^{2-\delta})$ time \cite{buchin2019seth} refutes OVH.
    \item An imbalanced lower bound for the continuous Fréchet distance in $\bR^2$, which states that a $(3-\varepsilon)$-approximation algorithm for complexity $n$ and $m$ curves under the $\ell_1$-norm with running time $O((nm)^{1-\delta})$ \cite{blank2026fr} refutes OVH.\footnote{\cite{blank2026fr} only shows non-existence of $(3-\varepsilon)$-approximation algorithms under the $\ell_\infty$ norm in $\bR^2$. In $\bR^2$, the $\ell_1$-unit-ball coincides with the $\ell_\infty$-unit-ball after a $45^\circ$ rotation, and scaling by $\sqrt{2}$. Hence it also implies the non-existence of $(3-\varepsilon)$-approximation algorithms under the $\ell_1$ norm, by applying the described transform to the instances constructed in \cite{blank2026fr}.}
\end{itemize}

\noindent
Both of these constructions live in a bounded domain ($[-10,10]\subset\bR^1$ and $[-5,5]\times[-5,5]\subset\bR^2$) and the construction decides whether the Fréchet distance is $1$ or $3$. 
Since the constructed edges have constant length, they  also rule-out faster discrete Fréchet distance computations.

\subparagraph{Applying these constructions.}
These results are not immediately applicable to our setting. Firstly, the curves used in these reductions overlap arbitrarily and are thus not $\lambda$-paths in the $d$-dimensional grid. 
Secondly, we consider a variety of Fréchet distance settings as we can either: (a) restrict the input $P$ and $Q$ to $\lambda$-paths in the $d$-dimensional grid, but consider a continuous traversal of these curves and consider the continuous Fréchet distance under the $\ell_1$ norm, or (b) restrict the input $P$ and $Q$ to $\lambda$-paths in the $d$-dimensional grid under the shortest path metric and compute the discrete Fréchet distance. 
We focus on the latter but note that our construction immediately applies to the former. Our key idea in all scenarios is to scale up these constructions by some factor $C$. 
If $C$ is sufficiently large then we can replace the edges in the original construction with paths in the $1$-dimensional (respectively $2$-dimensional) grid graph, incurring a degradation of inapproximability only in $\Theta(C^{-1})$.



\begin{lemma}[1D hardness]\label{lem:1Dhardness}
   For all $C\geq 10$, there exists no algorithm that, given two 1D \emph{curves}, each consisting of $N$ \emph{paths} in the grid $[-10C, 10C]$, can distinguish whether their discrete (or continuous) Fréchet distance is less than $1.1C$ or at least $2.9C$ in time $O(N^{2-\delta})$.
\end{lemma}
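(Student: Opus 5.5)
We reduce from the balanced 1D lower bound of Buchin, Ophelders and Speckmann~\cite{buchin2019seth}. Their construction produces two curves $P$ and $Q$ in $[-10,10]\subset\bR$, each with $O(N)$ vertices, and distinguishing $\Fd(P,Q)\le 1$ from $\Fd(P,Q)\ge 3$ in time $O(N^{2-\delta})$ refutes OVH. We use their construction with these constant gaps and constant-length edges; if needed, we first perturb each vertex by at most a small constant.

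\textbf{Scaling and rounding.} Multiply every vertex by $C$ and round it to the nearest integer. The curve $CP$ then lies in $[-10C,10C]$, and each vertex moves by at most $1/2$. By the triangle inequality applied along any fixed traversal, the continuous Fréchet distance changes by at most $1$. Hence a YES instance now has distance at most $C+1$, and a NO instance has distance at least $3C-1$.

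\textbf{Replacing edges by paths.} Each edge between rounded vertices $a$ and $b$ is replaced by the monotone unit-step walk $a,a\pm1,\dots,b$ in the 1D grid. This walk is a path, since it visits no vertex twice. The resulting object is the curve of $N$ paths referred to in the statement.

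\textbf{Continuous distance.} In 1D the grid path traces exactly the same continuous curve as the straight segment, so its continuous Fréchet distance is unchanged. Thus $C+1<1.1C$ and $3C-1\ge 2.9C$ hold for $C\ge 10$.

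\textbf{Discrete distance.} This step needs the most care. Two facts are needed.

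\emph{Upper bound.} Take an optimal continuous traversal of value $\Delta$. We convert it into a discrete monotone walk over the integer vertices. Whenever one curve crosses an integer vertex while the other is partway along a unit edge, we snap the other curve to the nearer endpoint of that edge. Each snap costs at most $1/2$ per curve, so the discrete distance is at most $\Delta+1$. This uses the standard fact that for curves with unit-length edges, $\Fdd\le \Fd+1$. The YES bound therefore becomes $C+2\le 1.1C$ when $C\ge 20$.

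If the threshold must hold exactly from $C\ge10$, we tighten the estimate. Rounding both curves consistently, for instance with floor, makes vertex shifts cancel, which saves the $+1$ in the rounding step.

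\emph{Lower bound.} We have $\Fdd\ge \Fd\ge 3C-1\ge 2.9C$, as required.

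\textbf{Size.} The instance has $N$ paths, so its complexity is $O(N)$ paths. The number of grid vertices is $O(CN)$, which for constant $C$ is still $O(N)$. Therefore an $O(N^{2-\delta})$ algorithm would refute OVH.

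The main obstacle is the discrete-versus-continuous conversion with additive constants. To get it right at the stated threshold $C\ge10$, we will carry out the bookkeeping explicitly: choose a consistent rounding and bound the snapping error by $1/2$ per curve.
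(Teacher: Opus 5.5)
Your plan (scale the Buchin--Ophelders--Speckmann instance by $C$, round, replace each edge by a unit-step grid path, bound the distortion) is the paper's. Your continuous bounds and the discrete NO bound $d_{dF}\ge d_F\ge 3C-1$ are fine. The gap is the discrete YES bound in the stated range. You charge $+1$ for rounding and another $+1$ for snapping a continuous traversal to vertices, which gives $C+2$. That is at most $1.1C$ only for $C\ge 20$, but the lemma claims $C\ge 10$, and Theorem~\ref{thm:exact-hardness} uses exactly $C=10$.

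Your repair is not an argument. Floor rounding does not make vertex shifts cancel: vertices at $x$ and $y$ move by the fractional parts of $x$ and $y$, which differ in general. Also, floor moves a vertex by up to $1$ rather than $1/2$, so your own triangle-inequality bookkeeping would weaken the NO bound to $3C-2$, which again needs $C\ge 20$. More fundamentally, the two-step pattern ``bound the continuous distance of the rounded curves, then add $1$ for discretisation'' is too lossy. The rounded curves can be at continuous distance $\lceil C\rceil$ (already for two single points at distance $C$), and $\lceil C\rceil+1>1.1C$ for non-integer $C$ slightly above $10$. (Also, $C+1<1.1C$ is an equality at $C=10$, not a strict inequality.)

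The missing idea is to round and discretise in a single step. Let $r(x)=\lfloor x+1/2\rfloor$, which is monotone, and apply it pointwise along an optimal continuous traversal $(\alpha,\beta)$ of the scaled curves $CP$ and $CQ$. In 1D, $r$ applied along a linear edge from $u$ to $v$ passes, in order, through exactly the unit-step grid path from $r(u)$ to $r(v)$. Hence, in the grid curves $\hat P$ and $\hat Q$, the indices of $r(CP(\alpha(t)))$ and $r(CQ(\beta(t)))$ are non-decreasing in $t$ with unit jumps. Together they trace a monotone walk, with simultaneous jumps giving diagonal steps. Since $r(x)-r(y)$ is an integer in the open interval $(x-y-1,\,x-y+1)$, we have $|r(x)-r(y)|\le\lceil |x-y|\rceil$. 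Therefore $d_{dF}(\hat P,\hat Q)\le\lceil C\rceil<C+1\le 1.1C$ for every $C\ge 10$, strictly as the statement requires, and the continuous YES bound follows from $d_F\le d_{dF}$.

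Combined with your NO bound for nearest rounding, this proves the lemma on the whole range. It is exactly the ``additive distortion of at most $1$'' that the paper's proof asserts without detail. Finally, do not restrict the size argument to constant $C$. The lemma is later applied with $C=10w$ growing with $N$, so note instead that the reduction takes $O(CN)$ time, which is subquadratic in $N$ in those applications.
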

\begin{proof}
   We consider the balanced lower bound from \cite{buchin2019seth} and scale the construction by a factor $C$.
   This yields a problem instance in $[-10C, 10C]$ where the continuous Fréchet distance is at most $C$ in the YES instance and at least $3C$ in the NO instance.
    Next, we replace each edge of the curves by a path in the 1-dimensional grid graph. 
    This introduces an additive distortion of at most $1$ when considering the discrete Fréchet distance. By choosing $C \ge 10$ we can bound this additive distortion by $1 \le 0.1C$. This yields two curves where their discrete (or continuous) Fréchet distance is at most $C + 1 \le C + 0.1C = 1.1C$ in the YES instance and at least $3C-1 \ge 3C-0.1C = 2.9C$ in the NO instance.
\end{proof}

\begin{lemma}[2D hardness]\label{lem:2Dhardness}
   For all $C\geq 20$, there exists no algorithm that, given  2D \emph{curves} consisting of $N$ and $M$ \emph{paths}, each of length $O(C)$,  in $[-5C, 5C] \times[-5C, 5C]$ respectively, can distinguish whether their discrete (or continuous) Fréchet distance is less than $1.1C$ or at least $2.9C$ in time $O((NM)^{1-\delta})$.
\end{lemma}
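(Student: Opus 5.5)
I will mirror the proof of Lemma~\ref{lem:1Dhardness}, now starting from the imbalanced construction of \cite{blank2026fr}, transformed to the $\ell_1$-norm as in the footnote. This gives two curves in $[-5,5]\times[-5,5]$ with $N$ and $M$ edges. Each edge has constant length, and deciding whether their Fréchet distance is at most $1$ or at least $3$ cannot be done in time $O((NM)^{1-\delta})$ unless OVH fails. I scale the instance by $C$. The curves then live in $[-5C,5C]^2$, every edge has length $O(C)$, and the YES/NO thresholds become $C$ and $3C$. Scaling preserves both the $\ell_1$-norm ratio and the combinatorial structure, so the hardness transfers verbatim.

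Next I replace every (scaled) edge by a path in the planar grid graph. First I round every vertex to a nearest lattice point, moving it by at most $1$ in $\ell_1$ (each coordinate moves by at most $1/2$). Then I connect consecutive rounded vertices by a grid path that stays close to the straight segment. A staircase path does this: at each step it moves in the coordinate direction that keeps it closest to the segment, so every grid vertex on it lies within $\ell_1$-distance $O(1)$ (say at most $2$) of the segment. The staircase is monotone in both coordinates, so its length is the $\ell_1$-length of the edge, which is $O(C)$.

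The core step is bounding the distortion of the Fréchet distance in both directions. Let $\hat P,\hat Q$ be the grid curves and $P,Q$ the scaled polygonal curves. Each grid path is within Fréchet distance $O(1)$ of its segment. This follows because we can parametrize both monotonically: walk along the segment at the speed of the projection of the staircase. Concatenating gives $d_F(P,\hat P)=O(1)$ and likewise for $Q$.

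By the triangle inequality for the Fréchet distance, $d_F(\hat P,\hat Q)$ is within an additive $O(1)$ of $d_F(P,Q)$. Passing from the continuous to the discrete Fréchet distance on $\hat P,\hat Q$ costs at most another additive $1$, since the edges have unit length. The shortest-path metric on the grid equals $\ell_1$, so this is exactly $\Fg$. In total the additive error is some absolute constant $c_0$. Concretely, $1$ (rounding) $+\,2$ (staircase) per curve plus $1$ for discretisation gives $c_0\le 2$ if I am careful, matching the threshold $C\ge 20$ giving $0.1C\ge 2$. So the YES instance has distance at most $C+0.1C=1.1C$ and the NO instance at least $3C-0.1C=2.9C$.

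Finally, the number of paths is exactly $N$ and $M$, each of length $O(C)$. Any algorithm distinguishing the two cases in $O((NM)^{1-\delta})$ time therefore decides the original instance, contradicting \cite{blank2026fr}. The construction itself takes only $O((N+M)C)$ time.

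The main obstacle is pinning down the additive constant so that it is at most $0.1C$ once $C\ge 20$. In particular I must check that rounding plus the staircase deviation totals at most $2$ in $\ell_1$. If my first bound gives a larger constant, I would use a tighter rounding: snap endpoints so that consecutive rounded endpoints stay consistent, and bound the deviation of a rounded point plus the staircase jointly rather than summing them. Note that the staircase deviation from the segment in $\ell_1$ is at most $1$.
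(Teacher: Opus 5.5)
Your route is the same as the paper's: take the $\ell_1$-version of the imbalanced instance from \cite{blank2026fr}, scale it by $C$, replace every edge by a grid path, and absorb a constant additive error into $0.1C$. The paper's two-line proof does exactly this and simply asserts that discretising an edge costs at most $2$.

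The step that does not go through as written is the constant, and the statement leaves no slack here. With $C\ge 20$ and thresholds $1.1C$ and $2.9C$, the \emph{total} additive error must be at most $2$. Your itemised bounds do not give this. Even counting rounding and staircase once, you get $1+2+1=4$. Moreover, $d_F(\hat P,\hat Q)\le d_F(\hat P,P)+d_F(P,Q)+d_F(Q,\hat Q)$, and symmetrically for the lower bound, so the errors of $\hat P$ and $\hat Q$ add. That gives $2(1+2)+1=7$, not ``$c_0\le 2$ if I am careful''. Separately, a bound on the distance of staircase vertices to the segment is not yet a Fréchet bound. Matching via orthogonal projection can lose, in $\ell_1$, a factor up to $(1+\sqrt{2})/2$ over the vertical deviation.

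Here is a clean repair for the staircase step. Take an edge $y=h(x)$ with lattice endpoints and slope in $[0,1]$; other slopes follow by symmetry. At column $x$, let the staircase climb from $\lfloor h(x)\rfloor$ to $\lfloor h(x+1)\rfloor$ and then step right. Keep the segment point $(x,h(x))$ fixed during the vertical steps, and move it by vertical projection during the horizontal step. Matched points then differ only in $y$, by at most $1$. So each curve is within $\ell_1$-Fréchet distance $1$ of its grid version, and the continuous statement holds with total error $2$, provided the scaled vertices are lattice points.

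Rounding non-lattice vertices costs up to $1$ more per curve, and your passage to the discrete distance up to another $1$. So for the discrete version you cannot certify the bound $2$ this way. The honest fix is to state the lemma for $C\ge 10c_0$, with whatever constant $c_0$ you actually prove. Nothing downstream needs the value $20$. Theorem~\ref{thm:exact-hardness} only uses some fixed constant $C$. Lemma~\ref{lem:4up-approximation-hardness} uses $C=\Theta(w)$, where a larger constant factor only widens the gap between the YES and NO values.
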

\begin{proof}
    We use the same scaling strategy as in Lemma \ref{lem:1Dhardness} and note that discretising a constant-length edge adds a distortion of at most $2$ under any $\ell_p$ norm. Each discretised edge results in a path of length $O(C)$ and so the Lemma follows.
\end{proof}

In the above two lemmas, the resulting curves still have arbitrary overlap.
We obtain bounds for $\lambda$-paths in $d$-dimensional grid graphs in two steps:
first, we construct two $\lambda$-paths $P_*$ and $Q_*$ (which we refer to as \emph{origin paths}) in the $(d-1)$-dimensional (respectively $(d-2)$-dimensional) grid graph of length $N$ and $M$.

These have the property that any pair of points in $P_*$ and $Q_*$ are (roughly) equidistant.
We use the remaining $1$ (respectively $2$) dimensions to embed the 1D or 2D hardness construction. 


\subsection{Constructing the origin path}

\begin{definition}
    For the $d$-dimensional grid graph $G$ ($d\geq 2$) we define the $r$-diagonals
    \begin{align*}
        D_r^d&=\{(x_1,\ldots,x_d)\in G\mid \text{$\forall i: x_i\geq 0$ and $\sum_{i=1}^dx_i=r$}\}\text{, and} \\
        -D_r^d&=\{(x_1,\ldots,x_d)\in G\mid \text{$\forall i: x_i\leq 0$ and $\sum_{i=1}^dx_i=-r$}\}.
    \end{align*}
\end{definition}

\noindent
These $r$-diagonals are defined such that for any pair $D_a^d$ and $- D_b^d$ the distance between any two points is uniquely determined by $a$ and $b$, and not by the specific points:

\begin{observation}\label{obs:surface-distance}
    For any $p\in D_a^d$ and any $q\in -D_b^d$ it holds that $d_G(p,q)=a+b$.
\end{observation}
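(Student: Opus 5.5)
We prove the observation by sandwiching $d_G(p,q)$ between two quantities that both equal $a+b$. Throughout we use the fact recalled in the preliminaries that in the infinite grid graph $G$ the shortest-path distance coincides with the $\ell_1$-distance, i.e.\ $d_G(p,q)=\|p-q\|_1=\sum_{i=1}^d |p_i-q_i|$. The justification is the usual one. Every edge changes exactly one coordinate by one, which gives the lower bound. Conversely, one can fix the coordinates one at a time, walking $|p_i-q_i|$ unit steps in coordinate $i$, which gives a path of exactly that length. It therefore suffices to compute $\|p-q\|_1$ for $p\in D_a^d$ and $q\in -D_b^d$.

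Write $p=(x_1,\ldots,x_d)$ and $q=(y_1,\ldots,y_d)$. By the definitions we have $x_i\geq 0$ and $y_i\leq 0$ for every $i$. Hence $x_i-y_i\geq 0$, so $|x_i-y_i| = x_i - y_i = x_i + |y_i|$ holds coordinatewise. The key point is simply that the two points lie in opposite closed orthants, so no coordinate difference can cancel.

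Summing over $i$ gives
\[
d_G(p,q)=\sum_{i=1}^d (x_i-y_i)=\sum_{i=1}^d x_i-\sum_{i=1}^d y_i = a-(-b)=a+b,
\]
using $\sum_i x_i=a$ and $\sum_i y_i=-b$. This value is independent of the particular choice of $p$ and $q$, which is exactly the claim.

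There is no real obstacle here. The only step that needs any care is the identification $d_G=\|\cdot\|_1$ on $\mathbb{Z}^d$, and the sign-consistency argument that removes the absolute values.
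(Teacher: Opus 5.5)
Your proof is correct and matches the paper's approach: the paper states this as an observation without proof, relying on the fact that $d_G$ coincides with the $\ell_1$-metric. Since $p$ and $q$ lie in opposite closed orthants, $|x_i-y_i|=x_i-y_i$ in every coordinate, and the sum splits as $a-(-b)=a+b$.
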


\subparagraph{Using $r$-diagonals.}
For a fixed value $r$, the number of grid points on an $r$-diagonal $|D_r^d|$ is $\binom{r+d-1}{d-1}\geq\left(\frac{r}{d}\right)^{d-1}$.
In our lower bound, we wish to embed curves of complexity $N$ and $M$ as $\lambda$-paths on $r$-diagonals.
Suppose that we wish to do so with a single $r$-diagonal, then $r$ needs to be very large and as a consequence, $D_r^d$ and $-D_r^d$ will be very far apart and the Fréchet distance between our two embedded paths will be large.
To combat this, we use the fact that we have access to $d-1$ (or $d-2$) dimensions, and instead create a thick ``band'' of diagonals with $r \in [a, a+2w]$. More concretely, we define the band: 
\[\band^{d}(a,w)=\bigcup_{i=0}^{w-1}\left(D^d_{a+2i}\cup D^d_{a+2i+1}\right).\]
We equivalently define the band $-\band^d(a,w)$ via $-D^d_{a+2i}\cup -D^d_{a+2i+1}$. Pairs of points in $\band^d(a,v) \times -\band^d(b,w)$ are now ``almost equidistant'', that is, their distance is in $[a+b,a+b+2v+2w-2]$.
A complication of introducing bands, is that we need to be able to route a $\lambda$-path through the area $\band^{d}(a,w)$ which is achieved by the following Lemma:

\begin{lemma}\label{lem:origin-path}
    Let three integers $a\geq 1$, $w\geq 1$ and $\lambda\geq 1$ be given. There is a $\lambda$-path, with complexity $\lambda w\left(\frac{a}{d}\right)^{d-1}$, in the subgraph of the $d$-dimensional grid graph induced by $\band^d(a,w)$.
\end{lemma}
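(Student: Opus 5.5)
We build the $\lambda$-path in two steps: first a simple path ($\lambda=1$) that visits every vertex of $\band^d(a,w)$, then $\lambda$ concatenated copies of it.

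\textbf{Structure of the band.} Consecutive diagonals $D^d_r$ and $D^d_{r+1}$ are joined by grid edges: every $p\in D^d_r$ is adjacent to $p+e_i\in D^d_{r+1}$, and no edges lie inside a single diagonal. The bipartite graph on $D^d_{r}\cup D^d_{r+1}$ is a "thickened simplex", and the full band is the subgraph of $\mathbb{Z}^d_{\ge 0}$ on the layers $a\le \sum x_i\le a+2w-1$. The count we need is easy: $|\band^d(a,w)|\ge 2w\,|D^d_a|\ge 2w(a/d)^{d-1}$. So a walk that visits every vertex of the band once has complexity at least $w(a/d)^{d-1}$, which is enough.

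\textbf{Step 1: a Hamiltonian path through one double layer $D^d_{r}\cup D^d_{r+1}$.} We argue by induction on $d$, using a snake (boustrophedon) ordering.
\begin{itemize}
\item For $d=2$ the two layers form a zigzag path: $(r,0),(r,1),(r-1,1),(r-1,2),\dots$ alternates between $D_r$ and $D_{r+1}$ and covers both. It starts at $(r,0)$ and ends at $(0,r+1)$.
\item For general $d$, slice by the last coordinate $x_d=k$. For $k\le r$, the slice restricted to $D^{d}_r\cup D^{d}_{r+1}$ is a copy of $D^{d-1}_{r-k}\cup D^{d-1}_{r-k+1}$. We traverse the slices $k=0,1,\dots,r$ in order using the inductive path, reversing direction in alternate slices. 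We then append the single point of $D^d_{r+1}$ with $x_d=r+1$.
\item To connect slices $k$ and $k+1$, the endpoint of slice $k$ is moved by $+e_d$, which decreases the remaining $(d-1)$-dimensional sum. This may require the inductive path to start and end at points lying in the lower layer, so that the step up in $x_d$ lands on a vertex of slice $k+1$.
\end{itemize}
Getting these endpoint conditions consistent is the main obstacle. A fallback that avoids exact Hamiltonicity is to let the walk revisit a vertex at most a constant number of times, and absorb that constant into $\lambda$.

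\textbf{Step 2: chaining the double layers.} For $i=0,\dots,w-1$, double layer $i$ consists of $D_{a+2i}\cup D_{a+2i+1}$. We link the end of layer $i$ to the start of layer $i+1$ by a single $+e_j$ step from $D_{a+2i+1}$ to $D_{a+2i+2}$. To make this possible, we choose a traversal direction and reflection for each layer so that its end lies on $D_{a+2i+1}$ at a point whose successor is the start of the next layer (for instance, always start and end at corner vertices such as $(r,0,\dots,0)$). The result is a simple path inside the band of length $\sum_i |D_{a+2i}| \ge w(a/d)^{d-1}$.

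\textbf{Step 3: from a simple path to a $\lambda$-path.} Traverse the simple path forward, then backward, then forward again, and so on, $\lambda$ times in total. Each pass adds one visit to every vertex, so each vertex is visited at most $\lambda$ times, and every edge used is a grid edge inside the band. The total complexity is at least $\lambda w(a/d)^{d-1}$. If a walk of exactly this complexity is required, we truncate it; a prefix of a $\lambda$-path is still a $\lambda$-path.
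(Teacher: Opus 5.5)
\textbf{There is a genuine gap: Step~1 aims at a path that does not exist for $d\ge 3$.} The graph on $D^d_r\cup D^d_{r+1}$ is bipartite with parts $D^d_r$ and $D^d_{r+1}$. By Pascal's rule, $|D^d_{r+1}|-|D^d_r|=\binom{r+d-1}{d-2}\ge d-1\ge 2$, so there is no Hamiltonian path. For instance, with $d=3$ and $r=1$ the parts have sizes $3$ and $6$.

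Your own connection rule runs into the same obstruction locally. For the $+e_d$ step out of slice $k$ to stay inside $D^d_r\cup D^d_{r+1}$, the slice-$k$ path must end on the lower layer (coordinate sum $r$), and it then enters slice $k+1$ on that slice's upper layer. So an interior slice path runs from the upper to the lower layer and uses equally many vertices of each side. It therefore cannot exhaust the strictly larger upper layer $D^{d-1}_{r-k+1}$.

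You name the endpoint conditions as the main obstacle but leave them unresolved. The fallback does not prove the lemma as stated: if vertices are revisited a constant number $c$ of times and $c$ is absorbed into $\lambda$, you no longer get a $\lambda$-path for the given $\lambda$. In particular, Theorem~\ref{thm:exact-hardness} uses the lemma with $\lambda=1$, where a genuinely simple path is required. Also, your $d=2$ zigzag does not cover both layers, since it misses $(r+1,0)$.

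\textbf{The missing idea is to weaken what the path must cover, not to give up simplicity.} Require only that the path in $D^d_r\cup D^d_{r+1}$ visit every vertex of the lower diagonal $D^d_r$, using $D^d_{r+1}$ purely as connectors. It should run from the upper corner $(r+1)e_s$ to the lower corner $re_t$, for arbitrary $s\neq t$. This is the paper's inductive statement.

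With this invariant, slice $x_d=k$ is traversed from $(r+1-k)e_{i_k}+ke_d$ to $(r-k)e_{i_{k+1}}+ke_d$, where $i_k$ alternates between $1$ and $2$. The $+e_d$ step from that endpoint lands exactly on the start of slice $k+1$, so the induction closes. Your $d=2$ zigzag, read backwards, is already a valid base case: it goes from $(r+1)e_2$ to $re_1$ and covers $D^2_r$.

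Your count in Step~2, $\sum_i|D^d_{a+2i}|\ge w(a/d)^{d-1}$, only needs the lower diagonals to be covered. So Steps~2 and~3 go through essentially as in the paper once Step~1 is replaced by this statement. In Step~2, traverse each double layer from a lower corner to an upper corner, so the $+e_s$ step lands on the lower corner of the next one.
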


\begin{proof} 
     We observe that for a single $r$-diagonal,  $|D_r^d|=\binom{r+d-1}{d-1}\geq\left(\frac{r}{d}\right)^{d-1}$. 

    We prove the Lemma by induction over $d$, by proving the following statement:
     For every $d\geq 2$, every $a$, and every $s,t\in \{1,\ldots,d\}$, where $s\ne t$ the subgraph $G_a$ induced by $\band^d(a,1)=D^d_a\cup D^d_{a+1}$ contains a path from $(a+1) e_s$ to $a e_t$ that contains all vertices of $D_a^d$, where $e_i$ is the $i$th basis vector, that is $a e_1=(a,0,0,0,\ldots)\in G_a\subset G$. 

     We start with the base case, where we set $d=2$ and $s=1$ and $t=2$. Then $G_a$ contains the path $(a+1,0),(a,0),(a,1),(a-1,1),(a-1,2),...,(1,a),(0,a)$.
     This is a simple path from $(a+1)e_1$ to $ae_2$ and contains all of $D_a^2$.

     Now consider some $d\geq 3$. By permuting the coordinates, it is enough to prove the claim for $s=1$ and $t=d$. For each $m=0,1,\ldots,a$ consider the slice $\Sigma_m=\{x\in G_a\mid x_d=m\}$. 
     Inside $\Sigma_m$, deleting the last coordinate identifies the slice with $D_{a-m}^{d-1}\cup D_{a-m-1}^{d-1}$ in $d-1$ dimensions. Now, let $i_0,i_1,\ldots i_{a}$ be the sequence that alternates between $1$ and $2$, starting at $1$.
     For each $m=0,1,\ldots,a$, we obtain via the induction hypothesis a path $P_m$ in $\Sigma_m$ from $((a+1)-m)e_{i_m}+me_d$ to $((a+1)-m-1)e_{i_{m+1}}+me_d$  containing every vertex of $D_{a}^d\cap\Sigma_m$. Further, the end of $P_m$ is adjacent to the start of $P_{m+1}$. So we may concatenate them to a simple path. Its first vertex is $(a+1)e_1$ and its last vertex is $ae_d$. As $\bigcup_m(\Sigma_m\cap D_a^d)=D_a^d$, the claim is also true for dimension $d$.

     Overall, we obtain a path in $G_a$ of length $2|D_a^d|$. This path ends (choosing $s=1$ and $t=d$) in $ae_d$, and can be extended to end in $(a+1)e_d$ instead. Similarly, consider the path in $G_{a+2}$ with the same $s$ and $t$. We can concatenate this path (traversed in reverse) to the first path. 
     This way, we obtain a path that is contained in $\band^d(a,2)$ of length at least $4|D_a^d|$. Repeating this process $w-2$ more times, we obtain a path in $\band^d(a,w)$ of length $w|D_a^d|=w\left(\frac{a}{d}\right)^{d-1}$. By walking back and forth $\lambda$ times we obtain the sought-after $\lambda$-path.
 \end{proof}

\subsection{Lower Bounds for exact algorithms}

We will first show, assuming the Orthogonal Vector Hypothesis, how we can exclude exact algorithms to compute the Fréchet distance between paths in $d$-dimensional grid graphs that run in strongly subquadratic time.
To this end, we apply either the known balanced construction for curves in $\bR^1$~\cite{buchin2019seth}, or the known imbalanced construction for curves in $\bR^2$~\cite{blank2026fr}.

\begin{theorem}\label{thm:exact-hardness}
    Unless the Orthogonal Vector Hypothesis fails, there is no algorithm to compute the discrete Fréchet distance between $1$-paths of complexity $n$ and $m$ in the $d$-dimensional grid graph in time
    $O(\min(n,m)^{2-\delta})$ if $d\geq 3$, or
        $O((nm)^{1-\delta})$ if $d\geq 4$ (for any $\delta > 0$).
\end{theorem}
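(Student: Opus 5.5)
We reduce from the hardness instances of Lemma~\ref{lem:1Dhardness} (for $d\geq 3$) and Lemma~\ref{lem:2Dhardness} (for $d\geq 4$), fixing a large constant $C$ (say $C=20$). The obstruction is that these curves overlap arbitrarily, so they are not $1$-paths. We remove the overlap using the spare $d-1$ (resp.\ $d-2$) coordinates, which will carry an origin path from Lemma~\ref{lem:origin-path}. Since only a constant gap (below $1.1C$ versus at least $2.9C$) must survive, an exact algorithm will suffice to distinguish the cases.

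\textbf{Construction.} Let $k=d-1$ (resp.\ $k=d-2$) and let $L=\Theta(C)$ bound the number of grid steps in each hardness curve. The hardness curves have $\Theta(NC)$ and $\Theta(MC)$ vertices. By Lemma~\ref{lem:origin-path} with $\lambda=1$ and $w=1$, choose simple paths $P_*\subset\band^k(a,1)$ and $-Q_*\subset-\band^k(a,1)$ with at least that many vertices. This needs $(a/k)^{k-1}\gtrsim NC$, i.e.\ $a=\Theta((NC)^{1/(k-1)})$ for $k\geq 2$, so both $k\geq 2$ requirements explain $d\geq 3$ and $d\geq 4$.

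We then form $P$ in $\mathbb{Z}^d$ by advancing in lockstep. Each step of the hardness curve $\hat P$ (a unit move in the last $1$ or $2$ coordinates) is interleaved with one step along $P_*$ in the first $k$ coordinates. Because $P_*$ is simple, the combined walk is a $1$-path, even where $\hat P$ revisits points. The same is done for $Q$ with $-Q_*$. Interleaving doubles the lengths, and we pad with stationary steps where necessary. The complexities are $n=\Theta(NC)$ and $m=\Theta(MC)$.

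\textbf{Distance analysis.} For $p=(x,\hat p)$ and $q=(y,\hat q)$ we have $d_G(p,q)=\|x-y\|_1+\|\hat p-\hat q\|_1$. By Observation~\ref{obs:surface-distance} and the definition of the band, $\|x-y\|_1\in[2a,2a+2]$ for every pair. Hence the distance equals $2a$ plus the hardness-instance distance, up to an additive $+2$.

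We must also relate the discrete Fréchet distance of the interleaved walks to that of $\hat P,\hat Q$. Any coupling of $\hat P,\hat Q$ lifts to a coupling of the interleaved walks, and conversely every coupling projects to one. Each interleaved vertex is within distance $1$ of the corresponding hardness vertex in the last coordinates, which adds at most $2$. Therefore
\[\Fg(P,Q)=2a+\Fdd(\hat P,\hat Q)\pm O(1).\]
Computing $\Fg(P,Q)$ exactly and subtracting $2a$ thus separates ${<}1.1C+O(1)$ from ${\geq}2.9C-O(1)$, which is a valid distinction once $C$ is a large constant.

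\textbf{Running time and the main obstacle.} Since $C$ is constant, an $O(\min(n,m)^{2-\delta})$ algorithm gives $O(N^{2-\delta})$ for the balanced 1D instance. An $O((nm)^{1-\delta})$ algorithm gives $O((NM)^{1-\delta})$ for the imbalanced 2D instance. Either contradicts OVH through the lemmas.

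The step needing most care is the lockstep interleaving. We must verify that it keeps the walk simple while changing the Fréchet distance only additively. In particular, the origin path's position must be a function of the hardness-curve step index so that it never forces a useful coupling to misalign. Because the band distance is almost constant regardless of position, misalignment costs nothing, which is exactly why the diagonals were introduced.
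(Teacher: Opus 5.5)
Your proposal is correct and follows the paper's proof. It scales the 1D/2D hardness instance by a constant $C$ and places simple origin paths from Lemma~\ref{lem:origin-path} (with $\lambda=w=1$) on opposite bands $\band^k(a,1)$ and $-\band^k(a,1)$, so that Observation~\ref{obs:surface-distance} shifts all distances by $2a$ up to an additive $2$; an exact algorithm then separates the constant gap. The only difference is granularity: the paper advances the origin path once per grid sub-path, which needs only about $N$ origin vertices and relies on each sub-path being simple, whereas your lockstep interleaving advances it after every step, which needs $\Theta(NC)$ origin vertices and is harmless since $C$ is constant.
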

\begin{proof}
        We first consider $d=3$.
        From the Orthogonal Vector Hypothesis, we may consider a balanced OV instance of complexity $N$. We apply Lemma~\ref{lem:1Dhardness}, picking $C = 10$, and a pair of one-dimensional curves $P$ and $Q$.
        These curves are restricted to the subgraph of the one-dimensional grid in $[-10C, 10C]$, and they are each the join of $N$ paths in $[-10C, 10C]$. 
        
     Next, we apply Lemma \ref{lem:origin-path} setting $a = N$, $w = 1$ and $\lambda = 1$. If $N \geq 9$ then we obtain a  $2$-dimensional $1$-path $P_*$ in the band $\band^{d-1}(N, 1)$ of complexity at least $N$.
     We similarly obtain a $2$-dimensional  $1$-path $Q_*$ in  $-\band^{d-1}(N, 1)$ of complexity at least $N$.

          \begin{figure}[!t]
    \centering
    \includegraphics{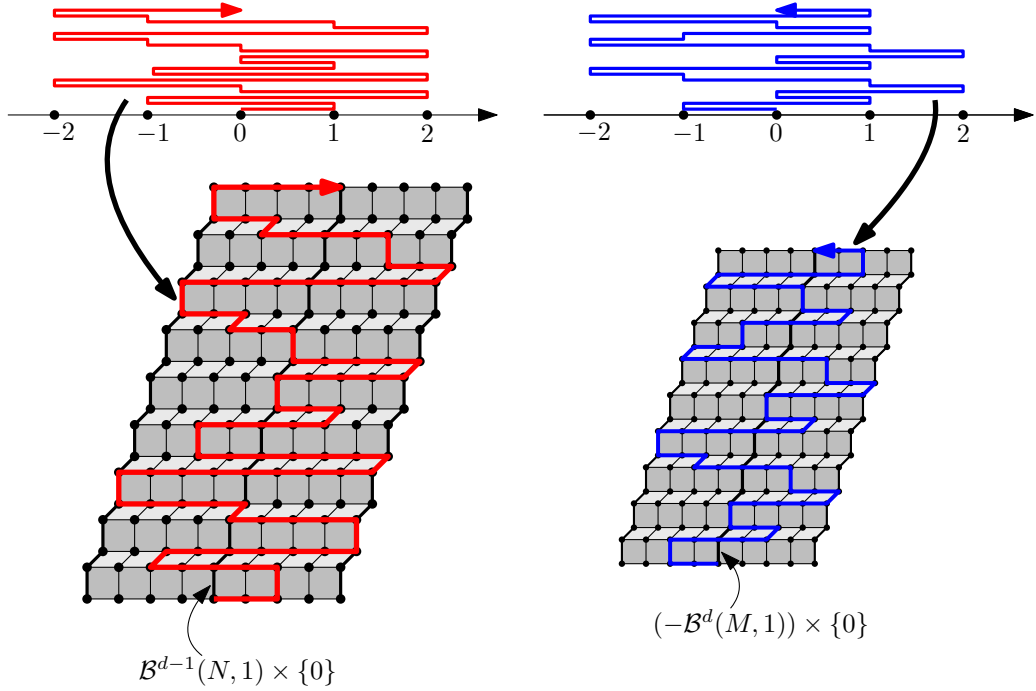}
    \caption{An illustration of the resulting paths $\hat{P}$ and $\hat{Q}$ in $\band^d(N,1)$ and $-\band^d(M,1)$.}
    \label{fig:lower-bound}
\end{figure}
     
     Our final step is to combine $P$ and $P_*$ to a path in the $3$-dimensional grid graph. For this, we start at the coordinate $(P_*[0],P[0])$. We then traverse the first path of $P$, while not changing the vertex of $P_*$ (see \Cref{fig:lower-bound}). Once we reach the end of the path, we advance $P_*$ by one, repeating this to the very end. The result is a path $\hat{P}$ in the $d$-dimensional grid graph of complexity $|P_*|\cdot|P|=O(N)$, as each path of $P$ has complexity at most $O(C)$.
     Similarly, we obtain a path $\hat{Q}$ from $Q_*$ and $Q$ with complexity $O(N)$. 
    Any point $\hat{P}[i]$ in $\hat{P}$ has a corresponding point $P[i]$ in $P$ and any point $\hat{Q}[j]$ in $\hat{Q}$ has a corresponding $Q[j]$ in $Q$.

    By \Cref{obs:surface-distance} the distance between any point pair $(\hat{P}[i], \hat{Q}[j])$ of $\hat{P}$ and $\hat{Q}$, and thus also the pair  that realises their discrete Fréchet distance, is in $[N+N+ \lVert P[i]-Q[j] \rVert,N+N+ 2 + \lVert P[i]-Q[j]\rVert]$. It follows that $\Fg(\hat{P},\hat{Q})\in[2N+\Fdd(P,Q), \, 2N+\Fdd(P,Q)+2]$.
    
    Since $C \geq 10$, the discrete Fréchet distance between $P$ and $Q$ is at most $1.1C$, then the discrete Fréchet distance between $\hat{P}$ and $\hat{Q}$ is less than $N+N+2+1.1C\leq 2N+1.3C$.
    If instead the distance of $P$ and $Q$ is more than $2.9C$, then the distance between $\hat{P}$ and $\hat{Q}$ is more than $N+N+2.9C$. An exact algorithm can distinguish between these two. Thus, because the complexity $n$ of $\hat{P}$ and $\hat{Q}$ are $\Theta(N)$, \Cref{lem:1Dhardness} excludes  for any $\delta>0$ exact algorithms for computing the discrete Fréchet distance between $\hat{P}$ and $\hat{Q}$ in time $O(n^{2-\delta})$.


     For $d \geq 4$, we can instead start from an imbalanced OV instance of sizes $N$ and $M$.
     We then apply \Cref{lem:2Dhardness}, setting $C = 20$, and obtain two-dimensional curves $P$ and $Q$ of complexity $N$ and $M$. 
     We construct $P_*$ and $Q_*$ in the same manner, except that $Q_*$ lies in $- \band^{d-1}(M, 1)$.
     We combine $P$ and $P_*$ (respectively, $Q$ and $Q_*$) in the same way and obtain $4$-dimensional $1$-paths  $\hat{P}$ and $\hat{Q}$ where their discrete Fréchet distance is less than $N+M+2+1.1C\leq N+M+1.3C$ if the discrete Fréchet distance between $P$ and $Q$ is at most $1.1C$.
     In contrast, if the discrete Fréchet distance between $P$ and $Q$ is at least $2.9C$ then the Fréchet distance between $\hat{P}$ and $\hat{Q}$ is at least $N+M+2.9C$. Thus, \Cref{lem:2Dhardness} excludes any exact algorithm for computing the discrete Fréchet distance between $\hat{P}$ and $\hat{Q}$ in $O(nm^{1-\delta})$ time for $\delta>0$ as $n \in \Theta(N)$ and $m \in \Theta(M)$.
\end{proof}

\subsection{Lower bounds for $(1+\eps)$-approximation algorithms}

Our lower bound in Theorem~\ref{thm:exact-hardness} excludes exact algorithms for all dimensions $d \geq 3$ (or $d \geq 4$ if we want an imbalanced bound). 
We can strengthen this result by excluding $(1+\eps)$-approximations. 
In particular, we can get conditional lower bounds that use the parameters $\lambda$, $\eps$, the dimension $d$, and the input sizes $n$ and $m$. 
We start with our balanced lower bound:

\begin{lemma}\label{lem:3up-approximation-hardness}
    Unless the Orthogonal Vector Hypothesis fails, for any $d\geq 3$ and any $\eps\in \left(d^{-\frac{d-1}{d-2}} \, \left(\frac{\lambda}{n}\right)^{\frac{1}{d-2}},1\right]$, there is no algorithm to $(1+\eps)$-approximate the discrete Fréchet distance between two $\lambda$-paths of complexity $n$  in time    $O\left(\left(\lambda^{2/d} \,  \frac{(n/d)^{2(1-1/d)}}{\eps^{2(1-2/d)}}\right)^{1-\delta}\right)$ with $\delta > 0$.
\end{lemma}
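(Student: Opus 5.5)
We follow the scheme of \Cref{thm:exact-hardness} for $d=3$, but now choose the scaling factor $C$ and the band parameters $(a,w)$ so that the additive offset coming from the origin paths is only an $\eps$-fraction of the gap in \Cref{lem:1Dhardness}.

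\textbf{Step 1: the 1D instance.} Start from a balanced OV instance of size $N$. Apply \Cref{lem:1Dhardness} with a scale $C$ to be chosen later. This gives 1D curves $P,Q$ in $[-10C,10C]$, each the concatenation of $N$ grid paths. Since every original edge has constant length, each such path has length $O(C)$. Distinguishing $\Fdd(P,Q)\le 1.1C$ from $\Fdd(P,Q)\ge 2.9C$ requires $N^{2-o(1)}$ time.

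\textbf{Step 2: the origin paths.} Using \Cref{lem:origin-path} in dimension $d-1$, build a $\lambda$-path $P_*$ in $\band^{d-1}(a,w)$ and a $\lambda$-path $Q_*$ in $-\band^{d-1}(a,w)$. We need complexity at least $N$, which requires $\lambda w (a/(d-1))^{d-2}\ge N$.

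\textbf{Step 3: combining.} Build $\hat P$ by advancing $P_*$ one step after each of the $N$ sub-paths of $P$, exactly as in the proof of \Cref{thm:exact-hardness}; build $\hat Q$ likewise. Then $\hat P$ and $\hat Q$ are $\lambda$-paths in $\mathbb{Z}^d$. Each grid vertex is visited at most $\lambda$ times: the 1D coordinate is only traversed while the $P_*$-vertex is fixed, and a single sub-path of $P$ may revisit 1D points but is a path (or can be taken as one). The complexity is $n=\Theta(NC)$. By \Cref{obs:surface-distance}, every pair distance lies in
\[
[2a+\|P[i]-Q[j]\|,\; 2a+4w+\|P[i]-Q[j]\|].
\]
Hence
\[
\Fg(\hat P,\hat Q)\in[2a+\Fdd(P,Q),\,2a+4w+\Fdd(P,Q)].
\]

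\textbf{Step 4: choosing parameters.} The YES case gives at most $2a+4w+1.1C$ and the NO case gives at least $2a+2.9C$. A $(1+\eps)$-approximation separates the two cases whenever
\[
(1+\eps)(2a+4w+1.1C)<2a+2.9C .
\]
Taking $w\le C/10$, this holds if $\eps(2a+2C)\lesssim C$, i.e.\ if $a\approx C/\eps$. So set $a=\Theta(C/\eps)$, $w=\Theta(C)$, and demand $\lambda C (C/(\eps d))^{d-2}\gtrsim N$. This gives
\[
C^{d-1}\approx N\,\eps^{d-2}d^{d-2}/\lambda,
\qquad
n=\Theta(NC)=\Theta\bigl(N^{d/(d-1)}(\eps d)^{(d-2)/(d-1)}\lambda^{-1/(d-1)}\bigr).
\]
Inverting, $N\approx \lambda^{1/d}\, n^{1-1/d}\,\eps^{-(1-2/d)}$ up to $d$-factors. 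An algorithm running in time $O\bigl((\lambda^{2/d}(n/d)^{2-2/d}\eps^{-2(1-2/d)})^{1-\delta}\bigr)=O(N^{2-2\delta})$ would therefore refute OVH via \Cref{lem:1Dhardness}.

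\textbf{Main obstacle.} The delicate part is the bookkeeping. First, we must ensure $C\ge 10$ and that $w$ and $a$ are integers. The lower bound $\eps> d^{-\frac{d-1}{d-2}}(\lambda/n)^{1/(d-2)}$ is exactly what guarantees $C\ge 10$ (equivalently, $N$ large relative to $\lambda$), and we would verify this by substituting the parameter choices. Second, we must confirm the $\lambda$-path property of $\hat P$. If the 1D sub-paths of $P$ back-track, we would instead route each sub-path monotonically. Alternatively, we use that a vertex $(P_*[k],x)$ is revisited only when $P_*$ revisits $P_*[k]$, which happens at most $\lambda$ times, provided each 1D sub-path is simple. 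We would handle this first by checking that the scaled construction of \cite{buchin2019seth} uses monotone edges, so every discretized edge is a simple 1D path.
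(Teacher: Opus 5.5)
Your proposal is correct and is essentially the paper's proof. The paper takes $a=d(N/(\lambda\eps))^{1/(d-1)}$, $w=\eps a$ and $C=10w$, i.e.\ $a=C/(10\eps)$ and $w=C/10$, which is your $a=\Theta(C/\eps)$, $w=\Theta(C)$; it then inverts $n=\Theta(NC)$ exactly as you do to get $N^2\approx\lambda^{2/d}n^{2-2/d}\eps^{-2(1-2/d)}$. Your bookkeeping is in places more careful than the paper's: you use the $(d-1)$-dimensional count $\lambda w(a/(d-1))^{d-2}$, and the band slack $4w$ where the paper writes $2w$. When you fix constants, keep the ones hidden in $\Theta(\cdot)$ small: literally $a\approx C/\eps$ makes $2\eps a=2C$ exceed the whole gap $1.8C$, and with the $4w$ slack even $w=C/10$ fails for $\eps$ near $1$, whereas e.g.\ $a=C/(4\eps)$, $w=C/100$ separate the two cases for all $\eps\le 1$.
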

\begin{proof}
    We start with some balanced instance of the Orthogonal Vector Problem, which are two sets of vectors of size $N$ each. 
    Given a fixed $d$, $\lambda$, and $\eps$, we then set the following values:  $a = d \left(\frac{N}{\lambda \eps}\right)^{1/(d-1)}$ and $w  = \eps a$.
    We note that by our lower bound on $\eps$,  we have that $w \geq 1$ (see \cref{lem:lowerboundOnEpsilon}). 
    We then choose the value $C = 10 w \geq 10$ and we apply Lemma \ref{lem:1Dhardness} and obtain two curves $P$ and $Q$, that each are the join of of $N$ paths in $[-10C, 10C]$. 
    Thus, these two curves have complexity $O(CN)$ and there exists no algorithm to distinguish between whether their Fréchet distance is less than $1.1C$ or at least $2.9C$ in time $O(N^{2-\delta})$ for all $\delta > 0$.

    Next, we apply 
    \Cref{lem:origin-path} to obtain  a $\lambda$-path in the band $\band^{d-1}(a, w)$ of length at least
    \[
    \lambda w  \left(\frac{a}{d}\right)^{d-1} = \lambda  w
    \left(\frac{d \left(\frac{N}{\lambda \eps}\right)^{1/(d-1)}}{d}\right)^{d-1} = \lambda  w  \frac{N}{\lambda \eps} = \frac{w}{\eps} N  \geq N.
    \]

   Denote this path by $P_*$. 
   We similarly obtain a $(d-1)$-dimensional  $\lambda$-path $Q_*$  of complexity at least $N$ which resides in $- \band^{d-1}(a,w)$.

    From the pair $(P, P_*)$, we obtain a $d$-dimensional $\lambda$-path as follows: we start at $(P_*[0], P[0])$ and then traverse the first path of $P$, whilst not changing the vertex of $P_*$. Once we reach the end of the first path, we advance $P_*$ by one and repeat. 
    Applying this procedure to both $(P_*, P)$ and $(Q_*, Q)$ yields two $\lambda$-paths $\hat{P}$ and $\hat{Q}$ of complexity $O(C N)$ each. Note that each point in $\hat{P}[i]$ in $\hat{P}$ corresponds to the point $P[i]$ in $P$.  
    We then apply Observation~\ref{obs:surface-distance}.

    For any two points $\hat{P}[i]$ and $\hat{Q}[j]$, their distance is at least $2a + \lVert P[i] - Q[j] \lVert$ and at most $2a + 2w + \lVert P[i] - Q[j] \lVert$.
    It follows that if the Fréchet distance between $P$ and $Q$ is at most $1.1C$ then the Fréchet distance of $\hat{P}$ and $\hat{Q}$ is at most $2a+2w + 1.1 \cdot 10\eps a=2a+13\eps a$.
    
    If, otherwise, the Fréchet distance of $P$ and $Q$ is at least $2.9C$, then the Fréchet distance of $\hat{P}$ and $\hat{Q}$ is at least $2a+29\eps a$. As $(1+\eps)(2a+13\eps a)\leq2a+28\eps a$, a $(1+\eps)$-approximation could distinguish these two. In particular, there cannot exist an $O(N^{2-\delta})$-time algorithm. 
    
    Let us finally, for the sake of contradiction, suppose there exists a $(1+\eps)$-approximation algorithm with running time     $O\left(\left(\lambda^{2/d}\frac{(n/d)^{2(1-1/d)}}{\eps^{2(1-2/d)}}\right)^{1-\delta}\right)$. Since $n=O(N\cdot C)$, and $C=O(d\eps (N/\eps\lambda)^{1/(d-1)})$, we obtain $n=N^{1+1/(d-1)}\eps^{1-1/(d-1)}\lambda^{-1/(d-1)}d$. Hence an algorithm with running time $O\left(\left(\lambda^{2/d}\frac{(n/d)^{2(1-1/d)}}{\eps^{2(1-2/d)}}\right)^{1-\delta}\right)$ runs in $O(N^{2-2\delta})$ time contradicting the fact that there cannot be a $O(N^{2-\delta})$ algorithm. 
\end{proof}

\noindent
We note that the lower bound on $\eps$ in Lemma~\ref{lem:3up-approximation-hardness} is tight:

\begin{theorem}\label{thm:3up-approximation-hardness}
Unless the Orthogonal Vector Hypothesis fails, 
 for any dimension $d\geq 3$ and any $\eps\in (0,1]$, there is no algorithm to $(1+\eps)$-approximate the discrete Fréchet distance between two $\lambda$-paths of complexity $n$  in time $O\left(\min\left(n^2,\lambda^{2/d} \,  \frac{(n/d)^{2(1-1/d)}}{\eps^{2(1-2/d)}}\right)^{1-\delta}\right)$ with $\delta > 0$.
\end{theorem}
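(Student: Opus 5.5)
The plan is a case split on whether $\eps$ lies above or below the threshold $\eps_0 := d^{-\frac{d-1}{d-2}}\left(\frac{\lambda}{n}\right)^{\frac{1}{d-2}}$ from Lemma~\ref{lem:3up-approximation-hardness}. First I would check, by direct computation, that $\eps_0$ is exactly where the two terms in the minimum cross. Setting $\lambda^{2/d}(n/d)^{2(1-1/d)}\eps^{-2(1-2/d)} = n^2$ and solving gives $\eps^{2(d-2)/d} = \lambda^{2/d} n^{-2/d} d^{-2(d-1)/d}$. Hence $\eps = (\lambda/n)^{1/(d-2)} d^{-(d-1)/(d-2)} = \eps_0$. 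So for $\eps \geq \eps_0$ the minimum is the second term, and for $\eps < \eps_0$ it is $n^2$, up to the constant-factor slack absorbed by the $O(\cdot)$.

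\textbf{Case $\eps > \eps_0$.} This is exactly the range covered by Lemma~\ref{lem:3up-approximation-hardness}. Since the minimum is at most the second term, an algorithm running within the minimum also runs within the time excluded there. The statement therefore follows immediately.

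\textbf{Case $\eps \leq \eps_0$.} Here I must rule out $O(n^{2-2\delta})$ time for a $(1+\eps)$-approximation.

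1. Run the construction from the proof of Lemma~\ref{lem:3up-approximation-hardness} with $\eps$ replaced by a larger $\eps' \geq \eps$. Choose $\eps'$ so that $w = \eps' a = 1$, which by Lemma~\ref{lem:lowerboundOnEpsilon} is the boundary situation.
2. This gives $C = 10$, $a = d\left(N/(\lambda\eps')\right)^{1/(d-1)}$, and by Lemma~\ref{lem:origin-path} origin paths of length at least $N$.
3. The argument of that proof shows that a $(1+\eps')$-approximation distinguishes YES from NO instances. Since $(1+\eps) \leq (1+\eps')$, a $(1+\eps)$-approximation does so as well.
4. The resulting complexity is $n = \Theta(CN) = \Theta(N)$, so an $O(n^{2-\delta})$ algorithm would give an $O(N^{2-\delta})$ algorithm for OV, contradicting OVH.

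The main obstacle is making the parameters consistent in this second case. The instance is built from the OV size $N$, while the theorem is phrased in terms of the output complexity $n$ and a given $\eps$. I would handle it as follows.

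1. At the boundary the choices $w = 1$ and $a = 1/\eps'$ force $N = \lambda\eps' (a/d)^{d-1}$.
2. If $\lambda$ is large, padding may be needed. I would simply take $P_*$ to be a $1$-path (i.e.\ $\lambda$-path with fewer repetitions, which is still a $\lambda$-path), or truncate it to length $N$. The hardness only needs $|P_*| \geq N$ and the distance window $[2a, 2a+2w]$ with $w = O(1) = O(\eps a)$.
3. The window needs $a \gtrsim 1/\eps$, i.e.\ the approximation slack $\eps a$ must dominate the $O(1)$ band width and the $C$-scale gap. So I take $a = \Theta(\max(1/\eps, dN^{1/(d-1)}))$; a larger $a$ only helps, since the band $\band^{d-1}(a,1)$ then contains at least $N$ vertices.
4. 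With $C = \Theta(\eps a)$ and $w = 1$, the YES/NO gap $\Theta(C)$ exceeds $(1+\eps)$ times the YES value. The complexity stays $\Theta(CN)$, and I then check that this is $O(N)$ in the regime $\eps \leq \eps_0$, i.e.\ when $\eps a = O(1)$.

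This verification that $\eps \leq \eps_0$ indeed keeps $C = O(1)$ is the computation I expect to be delicate. I would carry it out by reusing the identity $n = N^{1+1/(d-1)}\eps^{1-1/(d-1)}\lambda^{-1/(d-1)}d$ from the lemma at $\eps = \eps_0$.
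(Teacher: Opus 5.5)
Your case split at $\eps_0$ and your steps 1--4 for $\eps\le\eps_0$ match the paper's proof, and that main line is correct. The paper observes that at $\eps=\eps_0$ (the boundary $w=\eps a=1$, hence $C=10$ and $n=\Theta(N)$) Lemma~\ref{lem:3up-approximation-hardness} already rules out $O(n^{2-\delta})$ time. It then notes that a $(1+\eps)$-approximation with $\eps\le\eps_0$ is also a $(1+\eps_0)$-approximation. Re-running the construction directly at $w=1$, as you do, is slightly cleaner, since the lemma is stated for an open interval. Parameter consistency is then automatic: since $n\ge N$ and $\eps_0$ is decreasing in $n$, you get $\eps\le\eps_0\le d^{-\frac{d-1}{d-2}}(\lambda/N)^{\frac{1}{d-2}}=\eps'$.

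The ``main obstacle'' paragraph is therefore unnecessary, and you should drop it because its reasoning is reversed. The reduction needs the YES/NO gap, about $1.8C-2w$, to exceed the multiplicative slack, about $\eps(2a+2w+1.1C)$. With $w=1$ and $C=10$ this is an \emph{upper} bound on $\eps a$ (e.g.\ $\eps a\le 1$ suffices for $\eps\le 1$), not a lower bound $a\gtrsim 1/\eps$. So enlarging $a$ does not ``only help''.

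In particular, replacing the $\lambda$-path by a $1$-path forces $a$ to be large enough for the band to host $N$ distinct vertices, independently of $\lambda$. Meanwhile $\eps_0$ grows with $\lambda$. When $\lambda$ is a sufficiently large power of $n$ and $\eps=\eps_0$, this makes $\eps a$, and hence $C=\Theta(\eps a)$, polynomially large. Then $n=\Theta(CN)$ is polynomially larger than $N$, and an $O(n^{2-\delta})$ algorithm no longer yields a subquadratic OV algorithm. The genuine $\lambda$-path at $a=1/\eps'$ from your step 1 needs no such verification.
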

\begin{proof}
    Observe that for $\eps = d^{-\frac{d-1}{d-2}} \, (\frac{\lambda}{n})^{\frac{1}{d-2}}$, the lower bound from Lemma~\ref{lem:3up-approximation-hardness} excludes $O(n^{2-\delta})$-time algorithms.
    For any $\eps' < \eps$, a $(1+\eps')$-approximation that runs in $O(n^{2-\delta})$-time, is also a $(1+\eps)$-approximation that runs in $O(n^{2-\delta})$ time. 
\end{proof}
\begin{lemma}\label{lem:4up-approximation-hardness}
    Unless the Orthogonal Vector Hypothesis fails, for any $d\geq 4$ and any $\eps\in \left(\frac{1}{d}^{\frac{d-2}{d-3}}\,\frac{\lambda}{n}^{\frac{1}{d-3}},1\right]$, there is no algorithm to $(1+\eps)$-approximate the discrete Fréchet distance between two $\lambda$-paths of complexity $n$ and $m$, where $n \geq m$, in the $d$-dimensional grid graph in time 
    $O\left(\left(\lambda^{2/(d-1)} \, (m/d)  \, \frac{(n/d)^{1-2/(d-1)}}{\eps^{2(1-2/(d-1))}}\right)^{1-\delta}\right)$ with $\delta > 0$.
\end{lemma}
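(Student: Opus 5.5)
The plan mirrors the proof of Lemma~\ref{lem:3up-approximation-hardness}, but starts from the imbalanced 2D construction and uses only $d-2$ dimensions for the origin paths. Take an imbalanced OV instance with $N$ and $M$ vectors, $N\ge M$, and fix $d$, $\lambda$, $\eps$. Set
\[
a = (d-2)\left(\frac{N}{\lambda\eps}\right)^{1/(d-2)},\qquad w=\eps a, \qquad C = 20w .
\]
This is the formula from the balanced case with $d-1$ replaced by $d-2$. The lower bound on $\eps$ in the statement is chosen exactly so that $w\ge 1$, and hence $C\ge 20$; this is the analogue of Lemma~\ref{lem:lowerboundOnEpsilon} and needs only a short calculation. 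Apply Lemma~\ref{lem:2Dhardness} with this $C$. We obtain two 2D curves $P$ and $Q$, each made of $N$ respectively $M$ subpaths of length $O(C)$ in $[-5C,5C]^2$. No $O((NM)^{1-\delta})$ algorithm can decide whether $\Fdd(P,Q)<1.1C$ or $\Fdd(P,Q)\ge 2.9C$.

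Next, apply Lemma~\ref{lem:origin-path} in dimension $d-2$. This gives a $\lambda$-path $P_*$ in $\band^{d-2}(a,w)$ of length $\lambda w (a/(d-2))^{d-3}$. Since the exponent is $d-3$, the choice of $a$ above is off by one power; I would adjust $a$ to $(d-2)(N/(\lambda\eps))^{1/(d-2)}$ so that the length is exactly $\frac{w}{\eps}N\ge N$. It also gives a path $Q_*$ in $-\band^{d-2}(b,w')$.

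For $Q_*$ I would simply reuse the same band parameters $a$ and $w$. The path $Q_*$ only needs $M\le N$ vertices, so we can truncate it. Then Observation~\ref{obs:surface-distance} gives the distance window $[2a,\,2a+4w]$, which is again $O(\eps a)$. Choosing $C$ as a suitable constant multiple of $w$ keeps the gap argument intact: YES instances have distance at most $2a+O(\eps a)$ and NO instances at least $2a+c\,\eps a$, with $(1+\eps)$ times the former below the latter.

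Combine $(P_*,P)$ into $\hat P$ and $(Q_*,Q)$ into $\hat Q$ as before, holding the origin vertex fixed while traversing one 2D subpath. This produces $\lambda$-paths in $\bR^{(d-2)+2}=\bR^d$. Any subpath visits a given 2D point at most a constant number of times, and the origin path repeats each vertex at most $\lambda$ times, so the result is an $O(\lambda)$-path; rescaling $\lambda$ by a constant handles this. The complexities are $n=\Theta(NC)$ and $m=\Theta(MC)$. The gap argument then runs as in Lemma~\ref{lem:3up-approximation-hardness}.

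The main step, and the one I expect to be fiddly, is the final parameter translation. We have $C=\Theta(\eps a)=\Theta\bigl(d\,\eps\,(N/(\lambda\eps))^{1/(d-2)}\bigr)$, hence $n=\Theta(NC)$ and $m=\Theta(MC)$. We must check that
\[
\lambda^{2/(d-1)}\,(m/d)\,\frac{(n/d)^{1-2/(d-1)}}{\eps^{2(1-2/(d-1))}}=O(NM).
\]
Writing $D=d-1$, this asks whether $m\,n^{1-2/D}=MN\,C^{2-2/D}$ cancels against $\lambda^{2/D}\eps^{-2(1-2/D)}$.

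I suspect the statement's exponents, which are in terms of $d-1$, reflect that the authors' band parameter uses exponent $1/(d-1)$ tied to dimension $d-1$. So I would first try to match the statement exactly by solving for $a$ such that the cancellation is exact, treating the exponent in $a$ as a free parameter. Then I would verify that Lemma~\ref{lem:origin-path} still provides at least $N$ vertices, and that the resulting constraint $w\ge1$ is precisely the stated range of $\eps$. If the exponents do not cancel exactly, the fallback is to accept slack: show that the claimed running time is at most $O(NM)$ up to constants, which already suffices to contradict OVH.
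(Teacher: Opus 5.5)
Your construction matches the paper's: Lemma~\ref{lem:2Dhardness} for the imbalanced instance, origin paths in $\band^{d-2}(a,w)$ and $-\band^{d-2}(a,w)$ with $w=\eps a$ and $C=20w$, the same product construction, and the same gap argument. Your window $[2a,2a+4w]$ with $C=20w$ does give $(1+\eps)(2a+26\eps a)<2a+58\eps a$. Your \emph{first} choice $a=(d-2)(N/(\lambda\eps))^{1/(d-2)}$ is also the paper's, up to the prefactor, which the paper takes to be $d$. Use $d$: with $d-2$ the condition $w\ge 1$ no longer covers the whole stated range of $\eps$, and a larger $a$ only adds vertices.

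The gap is that you then abandon this choice because of a miscount, and never carry out the step that produces the exponent $1-2/(d-1)$. The length $\lambda w(a/(d-2))^{d-3}$ is not off by a power, because $w=\eps a$ supplies the missing factor $a$. It equals $\lambda\eps(d-2)(a/(d-2))^{d-2}=(d-2)N\ge N$. Both alternatives you consider fail:
\begin{itemize}
\item Your ``adjustment'' (evidently meant as exponent $1/(d-3)$, to make the length exactly $\frac{w}{\eps}N$) gives $C\propto N^{1/(d-3)}$. Hence $C\propto n^{1/(d-2)}$ and $NM=nm/C^2\propto mn^{1-2/(d-2)}$, which rules out only running times polynomially smaller than the claimed one.
\item Your guess $a\propto (N/(\lambda\eps))^{1/(d-1)}$ gives an origin path with only about $N(N/(\lambda\eps))^{-1/(d-1)}=o(N)$ vertices, too few to host the $N$ gadgets.
\end{itemize}
The ``accept slack'' fallback does not help, because the slack goes the wrong way. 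The claimed running time would then exceed $NM$ by a factor polynomial in $n$, so no contradiction with OVH follows.

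Your setup of the final check is also incorrect. We have $mn^{1-2/D}=MC\,(NC)^{1-2/D}=MN^{1-2/D}C^{2-2/D}$, not $MNC^{2-2/D}$. With your expression no choice of $a$ could make things cancel, since $C$ must grow with $N$. With the correct expression and $C=\Theta\bigl(d\eps (N/(\lambda\eps))^{1/(d-2)}\bigr)$, one gets $C^{2-2/(d-1)}=\Theta\bigl(d^{2(1-1/(d-1))}\eps^{2(1-2/(d-1))}(N/\lambda)^{2/(d-1)}\bigr)$. The factor $N^{-2/(d-1)}$ therefore cancels, and $\lambda^{2/(d-1)}(m/d)(n/d)^{1-2/(d-1)}\eps^{-2(1-2/(d-1))}=\Theta(NM)$ exactly; this is the paper's computation.

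So the $d-1$ in the statement does not come from anything living in dimension $d-1$. It comes from $C\propto N^{1/(d-2)}$, which forces $n=\Theta(NC)\propto N^{(d-1)/(d-2)}$, i.e.\ $C\propto n^{1/(d-1)}$, combined with $NM=nm/C^2$. Keeping your original $a$ (with prefactor $d$) and carrying out this computation closes the proof.
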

\begin{proof}
    The construction is very similar to the one in Lemma~\ref{lem:3up-approximation-hardness}.
    We start with some imbalanced instance of the Orthogonal Vector Problem, which are two sets of vectors of size $N$ and $M$. 
    Given a fixed dimension $d$, fixed $\lambda$, and $\eps$, we then set the following values:  $a = d \left(\frac{N}{\lambda \eps}\right)^{1/(d-2)}$ and $w  = \eps a$.
    We note that by our lower bound on $\eps> $, $w \geq 1$.


    We choose $C = 20 w \geq 20$ and apply Lemma \ref{lem:2Dhardness} to obtain two curves $P$ and $Q$, that respectively are the join of $N$ or $M$ paths in $[-5C, 5C] \times [-5C, 5C]$. 
    These curves have complexity $O(CN)$ and $O(CM)$ and there exists no algorithm to distinguish between whether their Fréchet distance is less than $1.1C$ or at least $2.9C$ in time $O(NM^{1-\delta})$ for all $\delta > 0$. 
    

    Via \Cref{lem:origin-path} with the specified $a$, $w$, and  $\lambda$, we obtain a $\lambda$-path $P^*$ in $\band^{d-2}(a,w)$ and $\lambda$-path $Q^*$ in $-\band^{d-2}(a,w)$, both of length at least
    \[
    \lambda  w  \left(\frac{a}{d}\right)^{d-2} = \lambda w \left(\frac{d \left(\frac{N}{\lambda \eps}\right)^{1/(d-2)}}{d}\right)^{d-2} = \lambda w \frac{N}{\lambda \eps} = \frac{w}{\eps} N  \geq N.
    \]

    We can construct two $\lambda$-paths $\hat{P}$ and $\hat{Q}$ of complexity $\Theta(CN)$ and $\Theta(CM)$ by using the same procedure as described in the proof of Lemma \ref{lem:3up-approximation-hardness}. 
    We then apply Observation~\ref{obs:surface-distance} likewise in the proof of Lemma \ref{lem:3up-approximation-hardness} to obtain that an $(1+\varepsilon)$-approximation for deciding if  $\Fdd(\hat{P},\hat{Q})>2a+28\varepsilon a$ can decide if $\Fdd(P,Q)>2.9C$.
     We claim that this yields our desired lower bound. Suppose there is an algorithm with running time     $O\left(\left(\lambda^{2/(d-1)}\frac{m/d \, (n/d)^{1-2/(d-1)}}{\eps^{2(1-2/(d-1))}}\right)^{1-\delta}\right)$. Since we have that $n=O(CN)$, $m=O(CM)$, and 
     $C=O(d\eps (N/\lambda\eps)^{1/(d-2)})$, we obtain 
     $n=O\left(N^{1+1/(d-2)}\eps^{1-1/(d-2)}\lambda^{-1/(d-2)}d\right)$, and $m=O\left(MN^{1/(d-2)}\eps^{1-1/(d-2)}\lambda^{-1/(d-2)}d\right)$. Overall, $mn^{1-2/(d-1)}=O\left(MN\eps^{2(1-2/(d-1))}\lambda^{-2/(d-1)}d^{2(1-1/(d-1))}\right)$. Hence an algorithm with running time $O\left(\left(\lambda^{2/(d-1)}\frac{m/d \, (n/d)^{1-2/(d-1)}}{\eps^{2(1-2/(d-1))}}\right)^{1-\delta}\right)$ runs in $O((MN)^{1-\delta})$ time,  which contradicts the fact that there cannot be a $O(NM^{1-\delta})$ algorithm.  
\end{proof}

\begin{theorem}\label{thm:4up-approximation-hardness}
Unless the Orthogonal Vector Hypothesis fails,
for any $d\geq 4$ and any $\eps\in (0,1]$,
there is no algorithm to $(1+\eps)$-approximate the discrete Fréchet distance between $\lambda$-paths of complexity $n$ and $m$ in time \mbox{$O\left(\min\left(mn,\lambda^{2/(d-1)}m/d\frac{(n/d)^{1-2/(d-1)}}{\eps^{2(1-2/(d-1))}}\right)^{1-\delta}\right)$ with $\delta > 0$.}
\end{theorem}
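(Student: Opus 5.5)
The proof follows the template of the proof of Theorem~\ref{thm:3up-approximation-hardness}: split on whether $\eps$ lies in the admissible range of Lemma~\ref{lem:4up-approximation-hardness}. Write
\[
f(n,m,\eps)=\lambda^{2/(d-1)}\,\frac{m}{d}\,\frac{(n/d)^{1-2/(d-1)}}{\eps^{2(1-2/(d-1))}},
\]
and let $\eps_0=\left(\frac{1}{d}\right)^{\frac{d-2}{d-3}}\left(\frac{\lambda}{n}\right)^{\frac{1}{d-3}}$ denote the threshold from Lemma~\ref{lem:4up-approximation-hardness}.

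\textbf{Case $\eps>\eps_0$.} Lemma~\ref{lem:4up-approximation-hardness} applies directly and rules out running time $O(f(n,m,\eps)^{1-\delta})$. Since $\min(mn,f)\le f$, an algorithm running in $O(\min(mn,f)^{1-\delta})$ also runs in $O(f^{1-\delta})$. This contradicts OVH.

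\textbf{Case $\eps\le\eps_0$.} We use monotonicity in $\eps$. A $(1+\eps)$-approximation is also a $(1+\eps_0')$-approximation for any $\eps_0'\ge\eps$. Take $\eps_0'$ to be just above $\eps_0$, or equal to $\eps_0$ if the boundary can be included by adjusting constants. We then need to show that $f(n,m,\eps_0)=\Omega(mn)$, up to factors depending only on $d$, which are constant.

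This is the computation to check. Substituting $\eps_0$, we get $\eps_0^{2(d-3)/(d-1)}=d^{-2(d-2)/(d-1)}(\lambda/n)^{2/(d-1)}$. Hence
\[
f(n,m,\eps_0)=\lambda^{2/(d-1)}\,\frac{m}{d}\,\frac{n^{1-2/(d-1)}}{d^{1-2/(d-1)}}\cdot d^{2(d-2)/(d-1)}\,\frac{n^{2/(d-1)}}{\lambda^{2/(d-1)}}.
\]
The powers of $\lambda$ cancel, the powers of $n$ combine to $n$, and the powers of $d$ collect to a $d$-dependent constant. So $f(n,m,\eps_0)=\Theta_d(mn)$.

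Therefore an $O(\min(mn,f)^{1-\delta})$ algorithm, which is in particular an $O((mn)^{1-\delta})$ algorithm, would give a $(1+\eps_0')$-approximation in time $O(f(n,m,\eps_0')^{1-\delta})$, contradicting Lemma~\ref{lem:4up-approximation-hardness}.

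The main obstacle is the boundary bookkeeping. The interval in Lemma~\ref{lem:4up-approximation-hardness} is open at $\eps_0$, so I would pick $\eps_0'=2\eps_0$, clipped to at most $1$. This changes $f$ only by a constant factor depending on $d$. If $2\eps_0>1$, the range of the lemma is essentially empty; in that case I would use $\eps=1$ directly, or fall back on Theorem~\ref{thm:exact-hardness} for $d\ge4$, noting that its gap construction gives constant-factor hardness against $O((nm)^{1-\delta})$ algorithms.
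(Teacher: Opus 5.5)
Your proposal is correct and follows the paper's proof. For $\eps$ above the threshold $\eps_0$ you apply Lemma~\ref{lem:4up-approximation-hardness} directly; for smaller $\eps$ you use that a $(1+\eps)$-approximation is also a coarser one, together with the fact that the lemma's bound evaluated at $\eps_0$ equals exactly $mn$. Your use of $2\eps_0$ to avoid the open endpoint is a sensible refinement that the paper glosses over.

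One correction to your final remark: Theorem~\ref{thm:exact-hardness} does not give constant-factor hardness, because its YES/NO values $N+M+1.3C$ and $N+M+2.9C$ have ratio tending to $1$. In the degenerate regime $\eps_0\geq 1$, i.e.\ $\lambda\geq d^{d-2}n$, the $\lambda$-constraint is vacuous, so you should instead embed the walks of Lemma~\ref{lem:2Dhardness} directly; their $1.1C$ versus $2.9C$ gap defeats any $(1+\eps)$-approximation with $\eps\leq 1$.
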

\begin{proof}
    Observe that for $\eps = \frac{1}{d}^{\frac{d-2}{d-3}}\,\frac{\lambda}{n}^{\frac{1}{d-3}}$, the lower bound from Lemma~\ref{lem:4up-approximation-hardness} excludes $O((mn)^{1-\delta})$-time algorithms.
    For any $\eps' < \eps$, a $(1+\eps')$-approximation that runs in $O((mn)^{1-\delta})$-time, is also a $(1+\eps)$-approximation that runs in $O((mn)^{1-\delta})$-time. 
\end{proof}


\section{Approximating $\FgPQ$ for paths in a d-dimensional grid graph}
\label{sec:upperbounds}
Given two $\lambda$-paths $P$, $Q$ in the $d$-dimensional grid graph $G$ we give an approximate decider that, for given $\Delta \geq 0$ and $\varepsilon > 0$, determines whether $\Fg(P,Q) > \Delta$ or $\Fg(P,Q) \le (1+\varepsilon)\Delta$. Throughout this section, we assume that the dimension $d \geq 2$.

\subparagraph{Free space matrix.}
Given two walks $P$ and $Q$ of complexity $n$ and $m$ and some $\Delta \ge 0$, we wish to test whether $\Fdd(P,Q)\leq \Delta$. A common tool for solving this \emph{decision variant} of the problem is the $\Delta$-free space matrix $M_\Delta$: an $n$ by $m$ binary matrix where $M_\Delta[i,j] = 0$ if $\d(p_i, q_j) \le \Delta$ and $M_\Delta[i,j] = 1$ otherwise. We follow geometric convention, where $i$ denotes the column of $M_\Delta$, corresponding to some $p_i \in P$. Column entries with value $0$ correspond to $q_j \in Q$ with $\d(p_i, q_j) \le \Delta$. Eiter and Mannila \cite{eitermannila94} define a directed graph on $M_\Delta$ where there is an edge from $M_\Delta[a,b]$ to $M_\Delta[c,d]$ if and only if $c \in \{a, a+1\}$, $d \in \{b, b+1\}$, and $M_\Delta[a,b] = M_\Delta[c,d] = 0$. They prove that $\Fdd(P,Q) \le \Delta$ if and only if there exists a monotone path from $(1,1)$ to $(n,m)$, i.e., iff there is a path from $(1,1)$ to $(n,m)$ in the constructed graph. Checking for the existence of such a path can be done in $O(nm)$ time. 

\begin{figure}[b]
    \centering
    \includegraphics{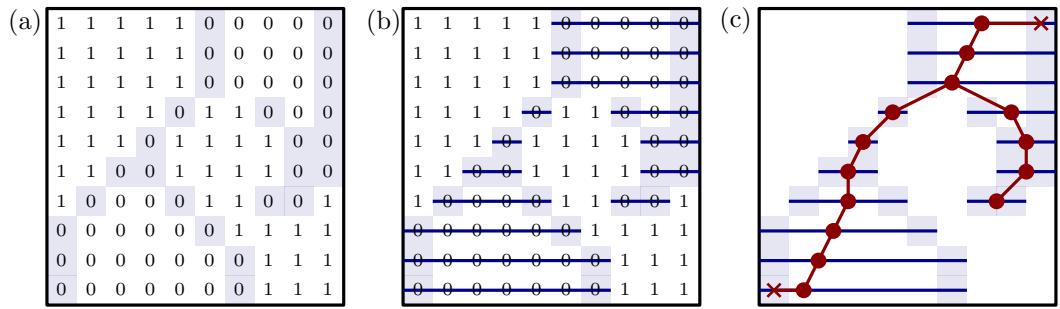}
    \caption{(a): A free space matrix $M_\Delta$, where the switching cells, $S_\Delta$, are blue. (b): Maximal intervals of consecutive zeros in each row. (c): Building an adjacency graph over these intervals.}
    \label{fig:SwitchingCells}
\end{figure}

\subparagraph{Switching cells.} 
There is an alternative approach for solving the decision variant.
We call a cell $M_\Delta[i,j]$ a \emph{switching cell}~\cite{vanderhoog_et_al:LIPIcs.ESA.2025.24} if $M_\Delta[i,j]=0$ and either $M_\Delta[i-1,j]=1$, $M_\Delta[i+1,j]=1$, or $(i,j)$ lies on the boundary of $M_\Delta$.
Let $S_\Delta$ be the set of all switching cells and let $f(S_\Delta)$ denote the time it takes to calculate the set $S_\Delta$.
One can decide whether $\Fdd(P,Q) \le \Delta$ in $O(|S_\Delta|+ f(S_\Delta))$ time as follows (see Figure~\ref{fig:SwitchingCells}).
Firstly, we calculate the set of switching cells using $f(S_\Delta)$ time. For each row, we probe one cell between every pair of consecutive switching cells to identify all maximal horizontal intervals of zeroes.
Two intervals are adjacent if they occur in consecutive rows and their horizontal intervals overlap. 
A depth-first search on this adjacency graph, when correctly accounting for monotonicity,  determines whether a monotone path exists from $(1,1)$ to $(n,m)$ in $M_\Delta$.

\subparagraph{Approximate deciders.}
Given $\Delta \geq 0$ and $\varepsilon > 0$, an \emph{approximate decider} correctly outputs either $\Fdd(P,Q) \le \Delta(1+\varepsilon)$ or $\Fdd(P,Q) > \Delta$.  Note that the algorithm may return either outcome if $\Fdd(P,Q) \in (\Delta, (1+\varepsilon)\Delta]$.  
A sufficiently efficient approximate decider yields $(1+\eps)$-approximation~\cite{DriemelHW12}.
We construct such a decider using simplified curves. 

\subparagraph{Simplified curves}
We view $P$ as a curve, that is, as a continuous map $P : [1,n] \to \bR^d$. 
Driemel, Har-Peled, and Wenk~\cite{DriemelHW12} define a simplified curve $\Pa$ by some increasing sequence $\Sigma$ of integers $i_t \in [1, n]$ such that $\Pa$ is a sequence of points $P(i_t)$ for $i_t \in \Sigma$. 
Fix some $\alpha$, initialise $i= 0$, and add the vertex $P(0)$ as the first vertex of $\Pa$. We then traverse $P$ starting at $P(i)$ until the first vertex $P(i')$ that lies outside the ball $B(P(i),\alpha)$ (see Figure \ref{fig:Palpha}).
Add the vertex $P(i' )$ to $\Pa$, update $i$ to $i'$, and repeat the process starting at the new $i$ until we reach $i=n$, which is added to $\Pa$ regardless. The resulting curve $\Pa$ consists of edges of length at least $\alpha$, except for possibly the last edge. Since all edges of $P$ have unit length, $\Pa$ has $O(n/\alpha)$ vertices. By construction, $\Fdd(P,\Pa) \le \alpha$~\cite{DriemelHW12}.

We choose $\alpha = \Delta\varepsilon/2$. While constructing the simplified curve $\Pa$, we furthermore maintain a table $T$ of size $n$ that records for each vertex $P(k)$ of $P$ the edge $(P(i_t), P(i_{t+1}))$ with $k \in [i_t, i_{t+1})$. We say that this is the edge that $P(k)$ is simplified to. 
\begin{figure}[b]
    \centering
    \includegraphics[width=0.4\linewidth]{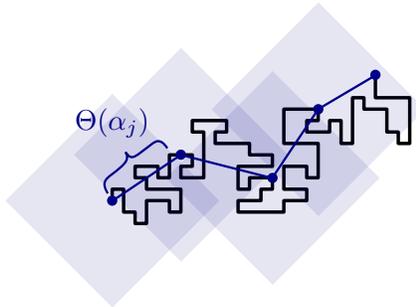}
    \caption{Construction of $\Pa$ (blue) given the path $P$ (black) }
    \label{fig:Palpha}
\end{figure}

\subparagraph{Combining methods}
Our approximate decider efficiently decides whether $\FddPQa \leq \Delta$. We will do so by using one of the two methods described at the start of this section, which respectively take either $O\left(|\Pa||\Qa|\right)$ or $O\left(f(S_{\Delta})+|S_{\Delta}|\right)$ time.  Firstly, we will analyse the number of switching cells $|S_{\Delta}|$ and how to efficiently compute $S_{\Delta}$.

\begin{lemma} \label{lemma:|SDelta|}
    Let $\Pa$ and $\Qa$ be the respective simplified curves of two $\lambda$-paths $P$ and $Q$ of size ${n}$ and ${m}$ in the $d$-dimensional grid graph. The free space matrix $M_{\Delta}$ for the curves $\Pa$ and $\Qa$ has $O\bigl( \Delta^{d-1} \lambda \frac{m}{\alpha}\bigr)$ switching cells.
\end{lemma}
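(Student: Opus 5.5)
We bound the switching cells row by row: fix a vertex $q$ of $\Qa$ (a row of $M_\Delta$) and count the vertices $p$ of $\Pa$ whose cell in that row is a switching cell. There are $O(m/\alpha)$ rows, since $\Qa$ has $O(m/\alpha)$ vertices. So it suffices to show that each row contains $O(\lambda\Delta^{d-1})$ switching cells.

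Every switching cell in row $q$ is either a boundary cell or a vertex $p_i$ of $\Pa$ with $\d(p_i,q)\le\Delta$ whose predecessor $p_{i-1}$ or successor $p_{i+1}$ in $\Pa$ is at distance $>\Delta$ from $q$. Boundary cells contribute $O(1)$ per row. For the remaining cells we charge to the subcurve of $P$ between consecutive simplified vertices. Suppose $p_i$ is within $\Delta$ of $q$ and $p_{i+1}$ is not. Then the portion of $P$ from $p_i$ to $p_{i+1}$ is a grid walk that exits the $\ell_1$-ball $B(q,\Delta)$, so it passes through a vertex $v$ with $d_G(v,q)\in\{\Delta,\Delta+1\}$; take $v$ to be the first such vertex. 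By construction of $\Pa$, every vertex of $P$ strictly between $p_i$ and $p_{i+1}$ lies in $B(p_i,\alpha)$. Hence either $v=p_{i+1}$ or $v$ lies in $B(p_i,\alpha)$. In both cases we charge the switching cell to $v$, a grid vertex in the shell $\{x: \Delta\le d_G(x,q)\le \Delta+1\}$, and more precisely to one specific visit of $P$ to $v$. Distinct indices $i$ correspond to disjoint subwalks $P[i_t,i_{t+1}]$, sharing at most endpoints. Therefore distinct switching cells (up to a factor $2$ for the predecessor case, handled symmetrically) receive distinct visits.

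The shell contains $O(\Delta^{d-1})$ grid vertices for constant $d$: the $\ell_1$-sphere of radius $r$ in $\mathbb{Z}^d$ has $O(r^{d-1})$ points, and when $\Delta<1$ the count is $O(1)$. Since $P$ is a $\lambda$-path, each vertex is visited at most $\lambda$ times. Thus there are at most $O(\lambda\Delta^{d-1})$ visits available, and each row has $O(\lambda\Delta^{d-1}+1)$ switching cells. Summing over the $O(m/\alpha)$ rows gives $O(\Delta^{d-1}\lambda\frac{m}{\alpha})$.

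The main obstacle is making the charging injective. The crossing vertex $v$ belongs to the original walk $P$, not to $\Pa$, and a single vertex can be the exit point for several simplified edges. To handle this, I would charge to the particular visit (the index $k$ in $P$) rather than to the vertex, and use the fact that the half-open index intervals $[i_t,i_{t+1})$ partition $[1,n]$. For the predecessor case ($p_{i-1}$ far, $p_i$ close), I would take the last shell vertex on $P[i_{t-1},i_t]$ instead. Note that this charging uses only that $P$ is a grid walk, so $\alpha$ does not enter the per-row bound.
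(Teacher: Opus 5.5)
Your proposal is correct and follows essentially the same argument as the paper: fix a row $q$ of $\Qa$, charge each $0$/$1$ transition along $\Pa$ to a crossing of $B(q,\Delta)$'s boundary by the subwalk of $P$ over the corresponding half-open index interval, bound these crossings by $O(\lambda\Delta^{d-1})$ via the $O(\Delta^{d-1})$ boundary vertices and the $\lambda$-path property, and sum over the $O(m/\alpha)$ rows. You are more explicit than the paper about injectivity: you charge to individual visits, treat the predecessor case separately at a factor $2$, and count boundary cells. The paper compresses this into a single ``unique intersection point'' remark, but the route is the same.
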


\begin{proof}
Consider any vertex $q \in \Qa$. Let us consider the number of switching cells in the row of $M_{\Delta}$ corresponding to the vertex $q$.

The $d$-dimensional ball $B(q,\Delta)$ has a surface area of $O\left( \Delta^{d-1}\right)$. As we move along the curve $\Pa$, transitions between cells with value 1 and cells with value 0 of $M_{\Delta}$ occur when $\Pa$ crosses the boundary of $B(q,\Delta)$. As $P$ is a $\lambda$-path in the $d$-dimensional grid graph, $P$ cannot intersect $B(q,\Delta)$ more than $O\left(\Delta^{d-1}\lambda\right)$ times.
For every intersection of an edge $(P(i), P(i'))$ of $\Pa$, there exists at least one unique intersection point $P(t)$ between $P$ and $B(q,\Delta)$ with $t \in [i, i')$. Thus, $\Pa$ can intersect $B(q,\Delta)$ at most $O\left( \Delta^{d-1}\lambda\right)$ times, bounding the number of switching cells for $q$. 
Thus, there are $O\bigl(\Delta^{d-1}\lambda \frac{m}{\alpha}\bigr)$ switching cells in total.
\end{proof}

\begin{lemma} \label{lemma:f(SDelta)}
Given the curves $\Pa$ and $\Qa$, a membership query data structure of $P$, and the table $T$ mapping each vertex of $P$ to its corresponding edge of $\Pa$, we can calculate the set of switching cells $S_{\Delta}$ in $M_{\Delta}$, for the curves $\Pa$ and $\Qa$, in $O\bigl(\Delta^{d-1}\lambda \, \frac{m}{\alpha} \,\log(n)\bigr)$ time. 
\end{lemma}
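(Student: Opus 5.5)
The plan is to process each row of $M_\Delta$ separately, i.e.\ each vertex $q$ of $\Qa$, and to enumerate the switching cells in that row in time proportional to their number, up to a $\log n$ factor. Summing over the $O(m/\alpha)$ rows then gives the stated bound, because Lemma~\ref{lemma:|SDelta|} bounds the work per row by $O(\Delta^{d-1}\lambda)$.

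For a fixed $q$, a switching cell in row $q$ is a vertex $P(i_t)$ of $\Pa$ with $\d(P(i_t),q)\le\Delta$ whose predecessor or successor on $\Pa$ lies outside $B(q,\Delta)$, or which is the first or last vertex. To find these we enumerate the grid vertices on the boundary shell of the ball, that is, the vertices $v$ with $\d(v,q)\in\{\Delta,\Delta+1\}$. In the $\ell_1$ grid metric this shell has $O(\Delta^{d-1})$ vertices, and it can be listed directly by iterating over coordinate offsets.

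For each shell vertex $v$ we query the membership structure of $P$. We take this to be a balanced search tree or dictionary mapping each grid vertex to the list of at most $\lambda$ indices $k$ with $P(k)=v$, with $O(\log n)$ lookup. For every occurrence $k$ we use the table $T$ to get the edge $(P(i_t),P(i_{t+1}))$ of $\Pa$ that $P(k)$ is simplified to. Any transition of $\Pa$ across $\partial B(q,\Delta)$ between consecutive vertices $P(i_t)$ and $P(i_{t+1})$ forces the unit-step walk $P$ restricted to $[i_t,i_{t+1}]$ to pass through a shell vertex. This is the same uniqueness argument as in Lemma~\ref{lemma:|SDelta|}. So every switching cell is adjacent, in the ordering of $\Pa$, to an edge discovered this way.

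For each discovered edge index $t$, we test in $O(1)$ time whether $P(i_t)$ and $P(i_{t+1})$ lie inside $B(q,\Delta)$, and we report whichever of them are switching cells. We also add the boundary columns, namely the first and last vertices of $\Pa$. Duplicates are removed by sorting the reported indices in $O(k\log k)$ time. The total work for row $q$ is $O(\Delta^{d-1}\lambda\log n)$.

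The main obstacle is the completeness step: we must show that every switching cell is detected, and in particular handle the case where a long edge of $\Pa$ jumps across the ball boundary. This is where the fact that $P$ moves by unit steps matters. Whenever $\d(P(\cdot),q)$ passes from at most $\Delta$ to more than $\Delta$, it must take the value $\Delta+1$ right after a vertex at distance $\Delta$, so the crossing always hits the shell. Combined with $T$, this lets us charge each switching cell to a shell occurrence.
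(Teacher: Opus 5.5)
Your proposal is correct and follows essentially the same route as the paper: for each vertex $q$ of $\Qa$ you enumerate the $O(\Delta^{d-1})$ grid vertices on the boundary shell of $B(q,\Delta)$ and query membership in $P$ in $O(\log n)$ time. You then use $T$ to find the edge of $\Pa$ each of the at most $\lambda$ occurrences is simplified to and test its endpoints, giving $O(\Delta^{d-1}\lambda\log n)$ per row over $O(m/\alpha)$ rows. Your version is somewhat more careful than the paper's: taking the shell at distances $\Delta$ and $\Delta+1$ and testing both endpoints ensures that crossings in either direction are charged, via $T$, to the correct edge of $\Pa$.
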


\begin{figure}[!t]
    \centering
    \includegraphics[]{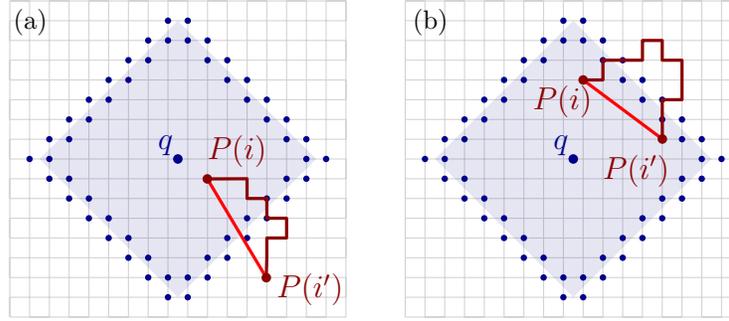}
    \caption{Finding a switching cells in a row of $M_{\Delta}$ corresponding to the point $q \in \Qa$. The ball $B(q,\Delta)$ is drawn in blue alongside the set of vertices $V$ along the perimeter of $B(q,\Delta)$. An edge $(P(i),P(i')) \in \Pa$ is drawn in red and the path segment of $P$ from $P(i)$ to $P(i')$ is drawn in dark red. (a): Detection of a switching cell. Instance where $P(i) \in B(q,\Delta)$ and $P(i') \notin B(q,\Delta)$. The edge $(P(i),P(i'))$ will be considered, as there exists a vertex in $P$, in-between $P(i)$ and $P(i')$, that lies in $V$. (b): No detection of a switching cell. Even though there exists a vertex in $P$, in-between $P(i)$ and $P(i')$, that lies in $V$, we will not detect a switching cell as both $P(i), P(i') \in B(q,\Delta)$.}
    \label{fig:DetectingSwitchingCells}
\end{figure}

\begin{proof}
 Consider any vertex $q \in \Qa$. First, we consider finding the switching cells in the row of $M_{\Delta}$ corresponding to $q$.
 Let $(P(i),P(i'))$ be an edge in $\Pa$. The cell in $M_{\Delta}$  corresponding to the two points $P(i)$ and $q$ is a switching cell if the point $P(i) \in B(q,\Delta)$ and $P(i') \notin B(q,\Delta)$ and vice versa. If this is the case, there must exist a point $P(t)$ of $P$ with $t \in [i, i')$ that lies on the boundary of $B(q,\Delta)$. We can detect these points as follows:

For each $q\in \Qa$ we identify the set of vertices $V$ in the $d$-dimensional grid graph that either intersect the ball $B(q,\Delta)$ or are endpoints of edges that intersect the ball $B(q,\Delta)$ (see Figure~\ref{fig:DetectingSwitchingCells}).
For each $v \in V$, we query the membership data structure of $P$ to test whether $v$ is in $P$ in $O(\log n)$ time. 
If $v$ is in $P$ then we use the table $T$ to look up the edge $(P(i),P(i')) \in \Pa$  that the vertex $v$ has been simplified to. 
If $P(i) \in B(q,\Delta)$ and $P(i') \notin B(q,\Delta)$, we mark the cell corresponding to the pair $(P(i), q)$ as a switching cell.

The ball $B(q,\Delta)$ has surface area $O(\Delta^{d-1})$, so $|V| \in O\!\left(\Delta^{d-1} \lambda\right)$. For each $v \in V$, checking membership takes $O\left(\log n\right)$ time, while identifying the edge $(P(i), P(i'))$, computing $\dist(P(i), q)$ and computing $\dist(P(i'), q)$ take $O(1)$ time. 
Thus, identifying the switching cells in a row of $M_{\Delta}$ takes $O\left( \Delta^{d-1} \lambda \log n\right)$ time, and $O\bigl( \Delta^{d-1}  \lambda  \tfrac{m}{\alpha}  \log n \bigr)$ time overall.
\end{proof}

\begin{lemma} \label{lemma:FDgeDelta}
We can check if $\FgPQa \leq \Delta$
in 
${O}\left( \frac{m n^{1-2/d}}{\eps^{2-2/d}}\left(\lambda\log(n)\right)^{2/d}\right)$
 time.
\end{lemma}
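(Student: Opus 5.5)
We combine the two deciding methods from the start of this section and pick, depending on $\Delta$, whichever is cheaper. Recall $\alpha=\Delta\eps/2$, so $|\Pa|=O(n/\alpha)=O(\frac{n}{\eps\Delta})$ and $|\Qa|=O(\frac{m}{\eps\Delta})$. We first build $\Pa$, $\Qa$ and the table $T$ in $O(n+m)$ time, and a membership structure for the vertex set of $P$ (e.g. a balanced search tree on lexicographically ordered coordinates) in $O(n\log n)$ time. The latter term is dominated, or is the additive $n$ term accounted for in the final theorem.

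\emph{Method 1 (Eiter--Mannila).} Running the quadratic dynamic program on the free space matrix of $\Pa$ and $\Qa$ takes
\[
O\left(|\Pa||\Qa|\right)=O\left(\frac{nm}{\eps^2\Delta^2}\right)
\]
time. \emph{Method 2 (switching cells).} By \Cref{lemma:|SDelta|} and \Cref{lemma:f(SDelta)}, the set $S_\Delta$ can be computed and the monotone path decided in
\[
O\left(\Delta^{d-1}\lambda\frac{m}{\alpha}\log n\right)=O\left(\frac{\Delta^{d-2}\lambda m\log n}{\eps}\right)
\]
time. The cost of Method 1 decreases in $\Delta$ and that of Method 2 increases, so we run Method 1 when $\Delta\ge\Delta^*$ and Method 2 otherwise. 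Here $\Delta^*$ is the balancing value, obtained from
\[
\frac{nm}{\eps^2\Delta^2}=\frac{\Delta^{d-2}\lambda m\log n}{\eps},
\qquad\text{that is,}\qquad
\Delta^{*d}=\frac{n}{\eps\lambda\log n}.
\]
In either case the running time is at most the common value at $\Delta^*$:
\[
\frac{nm}{\eps^2}\left(\frac{\eps\lambda\log n}{n}\right)^{2/d}=\frac{m\,n^{1-2/d}}{\eps^{2-2/d}}(\lambda\log n)^{2/d}.
\]
This is exactly the claimed bound.

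The main point needing care is correctness and the boundary cases, not the arithmetic. Both methods decide $\FddPQa\le\Delta$ exactly on the simplified curves: Method 1 by Eiter--Mannila, and Method 2 by the adjacency-graph search over maximal zero intervals built from $S_\Delta$. The quantity $\FgPQa$ in the statement is read as this test on the simplified curves. I must also check that the hidden constants in $|\Pa|=O(n/\alpha)$ hold when $\alpha<1$, i.e.\ that $|\Pa|\le n$ always. In that regime one can use $\min(n,n/\alpha)$, which only helps. The logarithmic factor on the Method 2 side comes from the membership queries; choosing $\Delta^*$ with the $\log n$ included is what produces the $(\lambda\log n)^{2/d}$ factor.
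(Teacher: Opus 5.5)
Your proposal is correct and takes the same approach as the paper. Like the paper, you balance the $O(|\Pa||\Qa|)$ Eiter--Mannila bound against the switching-cell bound from \Cref{lemma:|SDelta|} and \Cref{lemma:f(SDelta)}, run the cheaper method on each side of the crossover, and evaluate the common cost at that crossover. The only difference is cosmetic: you locate the crossover via $\Delta^{*d}=n/(\eps\lambda\log n)$ instead of the paper's $\alpha^*=(\eps^{d-1}n/(\lambda\log n))^{1/d}$, which is the same point up to constants since $\alpha=\eps\Delta/2$.
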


\begin{proof}
We decide whether
$\FgPQa \leq \Delta$ by using 
either the method with running time $O\left(|\Pa|\,|\Qa|\right)\in O\left(\frac{nm}{\alpha^2}\right)$ or the method with running time $O\left(f(S_{\Delta}) + |S_{\Delta}|\right)$. Lemma \ref{lemma:f(SDelta)} and Lemma \ref{lemma:|SDelta|} imply that 
$O\left(f(S_{\Delta}) + |S_{\Delta}|\right)
    \in O\left( 
    \left( 2 \frac{\alpha}{\varepsilon} \right)^{d-1}
    \frac{m}{\alpha}
    \log(n)\lambda
    \right)$, as $\Delta= \Theta(\alpha/\varepsilon)$. Thus, for large values of $\alpha$, the first method is faster until a switching point $\alpha^*$, which occurs when \[\left(  \frac{\alpha^*}{\varepsilon} \right)^{d-1}
\frac{m}{\alpha^*}\log(n)\lambda
= \frac{nm}{(\alpha^*)^2}\iff \alpha^*
=
\left(
\left({\varepsilon} \right)^{d-1}
\frac{n}{\log(n)\lambda}
\right)^{\frac{1}{d}}.\]
Hence, so long as $\alpha\geq \alpha^*$
we execute the first, otherwise the second method. Substituting $\alpha^*$ into either method yields an overall running time of $O\left(
{m}\left(\frac{n}{\varepsilon^2}\right)^{1-2/d}
\,\bigl( \frac{\lambda}{\eps} 
\log(n)\bigr)^{{2}/{d}}
\right)$.
\end{proof}

\subparagraph{Approximate decider} Given $P$ and $Q$ a threshold $\Delta$ and $\varepsilon > 0$, we can decide  whether $\FgPQa > \Delta$ or $\FgPQa \le (1+\varepsilon)\Delta$ as follows:
if $\FgPQa \le \Delta(1+\varepsilon/2)$ then we return that $\FgPQ \le (1+\varepsilon)\Delta$ and otherwise we return that $\FgPQ > \Delta$.

\begin{lemma} \label{lemma:Approxdecider}
    Our decider is correct and takes $O\left(
\frac{m n^{1-2/d}}{\eps^{2-2/d}}\left(\lambda\log(n)\right)^{2/d}+n \right)$ time.
\end{lemma}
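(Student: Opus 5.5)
The proof has two parts: correctness via the triangle inequality, and running time by adding the preprocessing cost to \Cref{lemma:FDgeDelta}.

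\textbf{Correctness.} With $\alpha=\Delta\eps/2$ we have $\Fdd(P,\Pa)\le\alpha$ and $\Fdd(Q,\Qa)\le\alpha$. The discrete Fréchet distance satisfies the triangle inequality, because monotone couplings compose. Hence
\[
\bigl|\FgPQ-\FddPQa\bigr|\le 2\alpha=\eps\Delta .
\]
We run the test of \Cref{lemma:FDgeDelta} with threshold $\Delta(1+\eps/2)$ instead of $\Delta$. This changes $\Delta$ only by a constant factor, so the running-time bound is unaffected. There are two outcomes.
\begin{itemize}
\item If the test reports $\FddPQa\le(1+\eps/2)\Delta$, then $\FgPQ\le(1+\eps/2)\Delta+\eps\Delta$. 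This is at most $(1+\eps)\Delta$ once constants are adjusted, e.g.\ by choosing $\alpha=\Delta\eps/4$, which is also only a constant-factor change.
\item If the test reports $\FddPQa>(1+\eps/2)\Delta$, then $\FgPQ>(1+\eps/2)\Delta-\eps\Delta/2=\Delta$. Here I use the version with $2\alpha\le\eps\Delta/2$; the constant bookkeeping is the only care needed.
\end{itemize}
In both cases the answer returned is valid.

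\textbf{Running time.} We first build $\Pa$, $\Qa$ and the table $T$ in one linear scan over $P$ and $Q$. Each ball-exit test is a constant-time $\ell_1$ distance check, so this costs $O(n+m)=O(n)$. Next we build a membership structure for the vertex set of $P$, for instance a balanced search tree on coordinates in lexicographic order. This takes $O(n\log n)$ time and gives $O(\log n)$ queries, as assumed in \Cref{lemma:f(SDelta)}.

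The $n\log n$ term is the one point to watch against the claimed bound, which has only an additive $n$. I would avoid it by using hashing, which gives $O(n)$ expected time. Alternatively, the $\log$ factors are absorbed in the $\tilde O$ form of the final theorem. After preprocessing, \Cref{lemma:FDgeDelta} gives the stated main term, and summing yields $O\bigl(\frac{mn^{1-2/d}}{\eps^{2-2/d}}(\lambda\log n)^{2/d}+n\bigr)$.

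Two further points I would state explicitly. First, a correct depth-first search over the switching-cell intervals requires the boundary cells of $M_\Delta$ to be handled. These are only $O(|\Pa|+|\Qa|)=O(n)$ extra cells, which fits within the additive $n$ term. Second, the degenerate case where $\alpha<1$ needs attention. Then $\Pa=P$, and the $nm$ method is bounded by the same expression via the switching-point analysis already done in \Cref{lemma:FDgeDelta}.
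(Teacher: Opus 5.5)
Your proposal is correct and follows the paper's proof: correctness comes from the triangle inequality through the simplified curves, and the running time is Lemma~\ref{lemma:FDgeDelta} plus $O(n)$ work to build $\Pa$, $\Qa$ and $T$ (the membership structure does not depend on $\Delta$, so it can be built once outside the decider). Your $2\alpha$ accounting is more careful than the paper's, which charges only one $\alpha$ even though both curves are simplified; with $\alpha=\eps\Delta/2$ the thresholds then fail to close, and your switch to $\alpha=\eps\Delta/4$ is the right constant-factor fix.
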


\begin{proof}
    First, we observe that $\FgPQ = \Fdd(P,Q)$ where the latter denotes the discrete Fréchet distance under the $\ell_1$ norm.  
    First, suppose $\Fdd(P,Q) \leq \Delta$. Then, by triangle inequality, $\FddPQa \leq \Delta+\alpha=(1+\frac{\eps}{2})\Delta$.
    Conversely, if $\Fdd(P,Q) > (1+\eps)\Delta$, then, by triangle inequality, $\FddPQa > (1+\eps)\Delta-\alpha=(1+\frac{\eps}{2})\Delta$.
    As the algorithm correctly decides whether $\FddPQa\leq(1+\frac{\eps}{2})\Delta$, the output is a correct approximate decision for whether $\Fdd(P,Q)\leq\Delta$. Lemma \ref{lemma:FDgeDelta} yields the running time. 
\end{proof}

We make use of the constructed approximate decider to search for an $(1+\varepsilon)$-approximation of $\FgPQ$. This implies our algorithmic main result, \Cref{thm:Upperbound}. To do so we invoke \cite[Lemma 3.7]{DriemelHW12}.

\begin{lemma}[\!{\cite[Lemma 3.7]{DriemelHW12}}]\label{lem:dhw}
Given two curves $P$ and $Q$ in $\mathbb{R}^d$, a parameter $1 \geq \varepsilon > 0$, and an interval $[a, b]$, one can perform a binary search in $[a, b]$ and obtain a $(1 + \varepsilon)$-approximation to $\Fdd(P,Q)$ if $\Fdd(P,Q) \in [a, b]$, or report that $\Fdd(P,Q) \notin [a, b]$. 
The algorithm takes 
$O\!\left(\log \frac{\log(b/a)}{\varepsilon}\right)$
calls to an $(1+\Theta(\eps))$-approximate decider of $\Fdd(P,Q)$.
\end{lemma}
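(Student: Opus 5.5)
We discretise $[a,b]$ geometrically and run a binary search whose invariant is stated purely in terms of the decider's answers, so that it never needs the decider to be monotone. Fix $\eps' = c\eps$ for a small constant $c$, chosen so that $(1+\eps')^2 \le 1+\eps$; the decider is run with parameter $\eps'$. Define $\Delta_i = a(1+\eps')^{i}$ for $i = 0,1,\ldots,K$, where $K = \lceil \log_{1+\eps'}(b/a)\rceil = O(\log(b/a)/\eps)$, so that $\Delta_K \ge b$. A call of the approximate decider at $\Delta_i$ returns one of two answers. The answer ``$\le$'' certifies $\Fdd(P,Q) \le (1+\eps')\Delta_i$. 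The answer ``$>$'' certifies $\Fdd(P,Q) > \Delta_i$.

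First we handle the boundary cases, using two calls.
\begin{itemize}
\item If the decider answers ``$>$'' at $\Delta_K$, then $\Fdd(P,Q) > \Delta_K \ge b$, and we report $\Fdd(P,Q) \notin [a,b]$.
\item If the decider answers ``$\le$'' at $\Delta_0 = a$, then $\Fdd(P,Q) \le (1+\eps')a$. Either the distance lies below $a$, which is outside the interval and may be reported as such, or it lies in $[a,(1+\eps')a]$, in which case $a$ is already a $(1+\eps')$-approximation. We output accordingly; the statement only promises the approximation when $\Fdd(P,Q) \in [a,b]$.
\end{itemize}

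Otherwise we hold indices $\ell = 0$ and $h = K$ satisfying the invariant that the answer at $\Delta_\ell$ is ``$>$'' and the answer at $\Delta_h$ is ``$\le$''. While $h - \ell > 1$, we query the midpoint $j$ and replace $\ell$ or $h$ by $j$ according to the answer. Each step preserves the invariant regardless of whether the decider's answers are consistent across indices. This is the one subtle point, since the decider is not monotone on the interval $(\Delta,(1+\eps')\Delta]$. We use $O(\log K) = O(\log\frac{\log(b/a)}{\eps})$ calls in total.

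When the search ends we have $h = \ell + 1$, and the two certificates combine to
\[
\Delta_\ell < \Fdd(P,Q) \le (1+\eps')\Delta_h = (1+\eps')^2\Delta_\ell .
\]
We output $\Delta_\ell$ (or $(1+\eps')\Delta_h$). Either value is within a factor $(1+\eps')^2 \le 1+\eps$ of $\Fdd(P,Q)$. Since the decider was run with parameter $\eps' = \Theta(\eps)$, this matches the $(1+\Theta(\eps))$-approximate decider in the statement.

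The main obstacle is correctness in the presence of the decider's ambiguity. Binary search over a non-monotone predicate can fail in general. It works here because the final guarantee only needs one adjacent pair with answers ``$>$'' and then ``$\le$'', and the invariant guarantees exactly such a pair. The remaining work is routine: choosing the constant $c$ and checking that the call count is $O(\log K)$.
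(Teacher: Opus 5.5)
Your proof is correct. The paper gives no proof of this lemma and only cites Driemel, Har-Peled and Wenk. Your binary search over the geometric grid $a(1+\Theta(\eps))^i$ is essentially that standard argument: the invariant is stated in terms of the decider's actual answers, so monotonicity is never needed, and the final adjacent pair pins $\Fdd(P,Q)$ down to within a factor $(1+\Theta(\eps))^2$.

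One small tightening. When the decider answers ``$\le$'' at $\Delta_0=a$, the algorithm cannot tell whether $\Fdd(P,Q)<a$, so it should simply output $a$ rather than ``output accordingly''. That output is a valid approximation whenever $\Fdd(P,Q)\in[a,b]$, and it never risks an unjustified report of $\Fdd(P,Q)\notin[a,b]$.
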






\begin{theorem}\label{thm:Upperbound}
    Let $P$ and $Q$ be $\lambda$-paths in the $d$-dimensional grid graph, with $n$ and $m$ vertices and $m \leq n$. 
    For $\eps >0$ we can $(1+\eps)$-approximate $\FgPQ$ in time 
    \[O\left( \left(
\frac{m n^{1-2/d}}{\eps^{2-2/d}}\left(\lambda\log(n)\right)^{2/d}+n \right)\log\!\left(\frac{\log n}{\eps}\right)\right).\]
\end{theorem}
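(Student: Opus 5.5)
The plan is to combine the approximate decider of \Cref{lemma:Approxdecider} with the binary search of \Cref{lem:dhw}. The work splits into three steps: bracket the true distance in an interval $[a,b]$ with $b/a$ polynomial in $n$, build the shared data structures once, and account for the cost of each decider call.

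\textbf{Step 1: the search interval.} Since $P$ and $Q$ are walks in the integer grid, every pairwise distance $d_G(p_i,q_j)$ is a nonnegative integer. Hence $\FgPQ$ is either $0$ or at least $1$.
\begin{itemize}
\item The case $\FgPQ=0$ holds exactly when $P$ and $Q$ become the same sequence after collapsing consecutive repeated vertices. This can be checked directly in $O(n)$ time, and then we output $0$.
\item Otherwise $\FgPQ\geq 1$. Since $\FgPQ\le d_G(p_1,q_1)+\max(n,m)$ (walk both curves in lockstep, each step changing the distance by at most one), we have $b\le d_G(p_1,q_1)+n$. A constant-factor estimate of $d_G(p_1,q_1)$ lets us translate or normalise.
\end{itemize}
A simpler route is to take $a=\max(1,\,d_G(p_1,q_1)-\ldots)$. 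Concretely, set $D=d_G(p_1,q_1)$. Then $\FgPQ\in[\max(1,D),\,D+2n]$, because $\FgPQ\ge D$ and $\FgPQ\le D+n+m$. The ratio $b/a$ is therefore $O(n)$, so $\log(b/a)=O(\log n)$. By \Cref{lem:dhw}, $O(\log(\log n/\eps))$ calls to a $(1+\Theta(\eps))$-approximate decider suffice.

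\textbf{Step 2: preprocessing and per-call cost.}
\begin{itemize}
\item Preprocessing is done once. We build the membership data structure of $P$, for instance a balanced search tree on the vertex coordinates in lexicographic order. It supports $O(\log n)$ queries and takes $O(n\log n)$ time, or $O(n)$ expected time with hashing. This cost must be charged carefully so that it stays within the $n\log(\cdot)$ term. If needed, we absorb the $\log n$ into the first summand, or use hashing to keep it $O(n)$.
\item Each decider call depends on $\Delta$, and so on $\alpha=\eps\Delta/2$. For each call we therefore recompute the simplified curves $\Pa,\Qa$ and the table $T$ in $O(n+m)=O(n)$ time by the greedy traversal.
\item We then run the decider of \Cref{lemma:Approxdecider} with $\eps$ replaced by a constant fraction of $\eps$. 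Its cost is $O\bigl(mn^{1-2/d}\eps^{-(2-2/d)}(\lambda\log n)^{2/d}+n\bigr)$.
\end{itemize}

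\textbf{Step 3: combining.} Multiplying the per-call cost by the number of calls gives the claimed bound. The constant-factor change in $\eps$ is hidden in the $O(\cdot)$.

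\textbf{Main obstacle.} The delicate point is making sure the choice between the two methods in \Cref{lemma:FDgeDelta} is valid for every $\Delta$ the search queries. That lemma's bound $|S_\Delta|=O(\Delta^{d-1}\lambda m/\alpha)$ relies on $\Delta=\Theta(\alpha/\eps)$, and the crossover $\alpha^*$ is independent of $\Delta$. So for each query we pick the cheaper method, and the bound holds uniformly. A second point to handle is small $\Delta$ (for example $\alpha<1$), where simplification does nothing. There we check that $|\Pa|\le n$ still holds and that the switching-cell bound uses $\max(\alpha,1)$. We expect this to hold because the minimum of the two method costs never exceeds the substituted value at $\alpha^*$.
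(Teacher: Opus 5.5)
Your proposal is correct and follows the paper's proof. Like the paper, you bracket $\FgPQ$ in an interval with endpoint ratio $O(n)$ after disposing of the case $\FgPQ=0$, then run the search of \Cref{lem:dhw} with the decider of \Cref{lemma:Approxdecider}; the only differences are that you use the sharper lower bound $d_G(p_1,q_1)$ and a direct $O(n)$ zero-check, where the paper calls the decider with $\Delta=0$.

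Two small slips do not affect the final argument. First, the bound $\FgPQ\le d_G(p_1,q_1)+\max(n,m)$ in your first bullet is false: take two straight paths leaving a common start in opposite directions. You then correctly use $d_G(p_1,q_1)+n+m$ instead. Second, an $O(n\log n)$ membership structure cannot in general be absorbed into the first summand when $m$ is small. To keep the one-time preprocessing at $O(n)$, use hashing, or radix-sort the vertices of $P$, whose coordinates all lie within distance $n$ of $p_1$.
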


\begin{proof}
    We use the approximate decider to search for an $(1+\varepsilon)$-approximation of $\Fdd(P,Q)$. To do so we first find an interval that contains the value $\Fdd(P,Q)$. 
    Let $p_0$ and $q_0$ be the first vertex of $P$ and $Q$. 
    As all edges in $P$ and $Q$ have unit length, all pairwise distances of points in $P$ and $Q$ must be within the interval $[\max(0,\lVert p_0-q_0\rVert-n-m),\lVert p_0-q_0\rVert +n+m]$. Hence, we can compute an interval $[a,b]$ of length $O(n+m)$ that contains $\Fdd(P,Q)$ in constant time. Note that $\Fdd(P,Q)$ is an integer.

    We begin by testing whether $\Fdd(P,Q) = 0$ via our approximate decider with $\Delta=0$. If this is the case, we output $0$ as the correct Fréchet distance. Otherwise, we are left with an interval $[a',b]$ containing $\Fdd(P,Q)$, where $a'=\max(1,a)$, and $b\leq O(n+m)a'$.

    Invoking \Cref{lem:dhw} with $[a',b]$, we obtain an algorithm that $(1+\eps)$-approximates $\Fdd(P,Q)$ within $O(\log(\frac{\log(n+m)}{\eps}))$ calls to the approximate decider.
\end{proof}

These results extend to the continuous Fréchet distance under the $\ell_1$-, and $\ell_p$-norm, the proof of which can be found in Appendix~\ref{app:continuous}.

\begin{restatable}{theorem}{thmContinuous}\label{thm:upperbound_any_norm}
    Let $P$ and $Q$ be $\lambda$-paths in the $d$-dimensional grid graph, with $n$ and $m$ vertices and $m \leq n$. 
    For $\eps >0$ we can $(1+\eps)$-approximate the continuous Fréchet distance between $P$ and $Q$ under the $\ell_1$ and $\ell_\infty$ norm in time $\tilde{O}\left(\frac{m n^{1-2/d}}{\eps^{2-2/d}}\lambda^{2/d}+n
\right)$.
\end{restatable}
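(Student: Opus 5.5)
We reuse the machinery of the discrete case and swap each discrete ingredient for its continuous counterpart under the $\ell_1$ (resp.\ $\ell_\infty$) norm. Concretely, we build an approximate decider for the continuous distance $\Fd(P,Q)$ and plug it into the search of \Cref{lem:dhw}, which applies to the continuous Fréchet distance as well. The interval $[a,b]$ is obtained as in the proof of \Cref{thm:Upperbound}. One difference is that $\Fd$ need no longer be an integer. However, under $\ell_1$ or $\ell_\infty$ all vertices are lattice points and edges are axis-parallel unit segments, so critical values are half-integers. Thus, after testing $\Delta=0$, we may take $a'=1/2$ and $b=O(n+m)$.

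For the decider, fix $\Delta$ and $\alpha=\eps\Delta/2$. We simplify $P$ and $Q$ to $\Pa,\Qa$ exactly as before, recording the table $T$. Then $\Fd(P,\Pa)\le\alpha$ holds in any norm by~\cite{DriemelHW12}. The triangle inequality shows, as in \Cref{lemma:Approxdecider}, that an exact decision of $\Fd(\Pa,\Qa)\le(1+\eps/2)\Delta$ suffices. There are two methods.

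\emph{First method.} We run the Alt--Godau free-space diagram on $\Pa,\Qa$ in $O(|\Pa||\Qa|)=O(nm/\alpha^2)$ time.

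\emph{Second method.} This is a sparse version of the free-space diagram. For a vertex $q$ of $\Qa$ and an edge $e=(P(i),P(i'))$ of $\Pa$, the free interval on $e$ is a single interval, since the ball is convex. It is nontrivial only if $e$ meets $B(q,\Delta')$, where $\Delta'=(1+\eps/2)\Delta$. If an endpoint of $e$ lies inside the ball, then the cell is determined by its endpoints. The difficulty is an edge of $\Pa$ that passes through the ball with both endpoints outside. We bound these as follows. Such an edge has length at least $\alpha$ and lies within distance $\alpha$ of the subpath $P[i,i']$. So if $e$ meets $B(q,\Delta')$, then $P[i,i']$ meets $B(q,\Delta'+\alpha)$. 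Hence the relevant edges of $\Pa$ are found by querying the lattice vertices in the shell of $B(q,\Delta'+\alpha)\setminus B(q,\Delta'-\alpha)$, or inside it, and mapping them through $T$.

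This is the step I expect to be the main obstacle, and it is also where the norm matters. For $\ell_1$ and $\ell_\infty$, balls are lattice-aligned polytopes. An edge of $\Pa$ that crosses the boundary corresponds to a subpath of $P$ that visits a lattice vertex within $O(1)$ of the boundary. I would argue this by monotone interpolation along coordinates, which works for these two norms but not for general $\ell_p$. This charges each boundary event to one of $O(\Delta^{d-1}\lambda)$ vertex visits, mirroring \Cref{lemma:|SDelta|}. The row of cells for $q$ therefore has $O(\Delta^{d-1}\lambda)$ nonempty "boundary" cells. All other cells are entirely free or entirely blocked, and runs of them are passable trivially.

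Vertical passages, that is, the free intervals along edges of $\Qa$, are handled symmetrically by swapping roles. The reachability propagation along maximal runs of fully free cells is done by the interval/DFS scheme described for switching cells: reachable intervals propagate through a fully free run in $O(1)$. This gives $O(\Delta^{d-1}\lambda\frac{m}{\alpha}\log n)$ time, analogous to \Cref{lemma:f(SDelta)}, using $m\le n$ so that the symmetric term is dominated. Balancing the two methods at $\alpha^*$ exactly as in \Cref{lemma:FDgeDelta} yields the running time $\tilde O\bigl(\frac{mn^{1-2/d}}{\eps^{2-2/d}}\lambda^{2/d}+n\bigr)$ per decision. Multiplying by the $O(\log(\log n/\eps))$ calls from \Cref{lem:dhw} gives the theorem.
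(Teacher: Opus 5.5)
\textbf{There is a genuine gap, at exactly the step you flagged as the main obstacle.} Your route also differs from the paper's.

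Your charging claim is false. It says that an edge $(P(i),P(i'))$ of $\Pa$ meeting $B(q,\Delta')$ forces $P[i,i']$ to visit a lattice vertex within $O(1)$ of $\partial B(q,\Delta')$. This fails when both endpoints lie outside the ball. The segment can then cut a corner (a lower-dimensional face) of the $\ell_1$- or $\ell_\infty$-ball, while the underlying walk goes around that corner far away.

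A concrete example in the plane under $\ell_1$, with $q$ at the origin:
\begin{itemize}
\item Start at $P(i)=(\Delta'-\tfrac{\alpha}{4},-\tfrac{\alpha}{2})$.
\item Walk right to $x=\Delta'+\tfrac{\alpha}{4}$, then up to $y=0$.
\item Take a left/up staircase along the facet $x+y=\Delta'+\tfrac{\alpha}{4}$ of $B(P(i),\alpha)$ to $(\Delta'-\tfrac{\alpha}{4},\tfrac{\alpha}{2})$.
\item Step up once more; this vertex is $P(i')$.
\end{itemize}
The simplified edge is the vertical segment through $(\Delta'-\tfrac{\alpha}{4},0)$, which lies inside $B(q,\Delta')$. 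Yet every vertex of $P[i,i']$ has norm at least $\Delta'+\tfrac{\alpha}{4}-1$. Probing only near $\partial B(q,\Delta')$ misses this partially free boundary and may classify it as blocked, so correctness of your decider is lost.

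The number of such edges per row is still $O(\lambda\Delta^{d-1})$, by a packing argument. Each such edge owns more than $\alpha$ vertices of $P$, all in a shell of width $O(\alpha)$ around $\partial B(q,\Delta')$. That shell contains $O(\alpha\Delta^{d-1})$ lattice points, each visited at most $\lambda$ times. However, finding these edges by probing that shell costs a factor $\alpha$ more per row, which breaks the balancing of \Cref{lemma:FDgeDelta}. You would need a new enumeration step, for example range reporting over the vertices of $\Pa$ in an annulus, and the proposal does not supply one.

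Separately, you handle the vertical boundaries (vertices of $\Pa$ against edges of $\Qa$) by swapping roles. That costs $O(\Delta^{d-1}\lambda\frac{n}{\alpha}\log m)$. Since $m\le n$, this term dominates the $\frac{m}{\alpha}$ term rather than being dominated by it. Balancing then yields $n m^{1-2/d}$ instead of $m n^{1-2/d}$, so the imbalanced bound does not follow.

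The paper never builds a continuous decider. It uses exactly the observation you used only to set up the search interval. Under $\ell_1$ and $\ell_\infty$, with lattice vertices and unit axis-parallel edges, every critical event of the continuous Fréchet distance lies at a vertex or at an edge midpoint. The reason is that the distance from a lattice point to a point moving along such an edge is affine, with integer values at the endpoints. Hence an equidistance point in the interior of an edge must be its midpoint.

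So the paper bisects every edge and rescales by $2$. This gives grid $O(\lambda)$-paths with $2n$ and $2m$ vertices whose discrete Fréchet distance equals twice the continuous one. \Cref{thm:Upperbound} then applies verbatim. In that reduction the simplification is only used in the discrete setting. There, a switching cell is witnessed by a change of ball membership between the endpoints of a simplified edge, and hence by a lattice vertex of $P$ next to the ball boundary. The corner-cutting edges that break your argument never arise.
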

\bibliographystyle{plainurl}
\bibliography{refs}
\appendix

\section{Supplementary math}

In this appendix, we supply the additional arithmetic needed to support the claim that $a \eps > 1$ from the proof of Lemma~\ref{lem:3up-approximation-hardness}.
The math for this claim in Lemma~\ref{lem:4up-approximation-hardness} is analogous. 

\subsection{Details for the proof of Lemma~\ref{lem:3up-approximation-hardness}}
\begin{lemma} \label{lem:lowerboundOnEpsilon}
    Consider variables $\eps$, $\lambda$, $N$, $d$ and define
    \begin{align}
        a& := d \left(\frac{N}{\lambda \eps}\right)^{1/(d-1)}\label{eq:a}\\
        n& :=N^{1+1/(d-1)}\eps^{1-1/(d-1)}\lambda^{-1/(d-1)}d\label{eq:n}
    \end{align}
    Moreover, assume that 
    \begin{align}    
        \varepsilon &> d^{-\frac{d-1}{d-2}} \, \left(\frac{\lambda}{n}\right)^{\frac{1}{d-2}}\label{eq:eps}
    \end{align}   

    then it follows that $a \varepsilon>1$. Note that equation \ref{eq:a},\ref{eq:n} and \ref{eq:eps} define the same setting as in the proof of Lemma~\ref{lem:3up-approximation-hardness}.
\end{lemma}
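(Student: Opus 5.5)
The plan is a direct computation: write $a\eps$ purely in terms of $n$, $\lambda$, $\eps$ and $d$ by eliminating $N$ using \eqref{eq:n}, and then check that the condition $a\eps>1$ is equivalent to \eqref{eq:eps}. Throughout we use $d\geq 3$, as in Lemma~\ref{lem:3up-approximation-hardness}, and that all quantities are positive, so raising to positive powers preserves inequalities.

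First, from \eqref{eq:a} we get
\[
a\eps = d\,\eps\left(\frac{N}{\lambda\eps}\right)^{1/(d-1)} = d\,\eps^{\frac{d-2}{d-1}}\,\lambda^{-\frac{1}{d-1}}\,N^{\frac{1}{d-1}}.
\]
Second, we rewrite \eqref{eq:n} as
\[
n = d\,N^{\frac{d}{d-1}}\,\eps^{\frac{d-2}{d-1}}\,\lambda^{-\frac{1}{d-1}},
\]
and solve for $N^{1/(d-1)}$ by taking the $1/d$-th power:
\[
N^{\frac{1}{d-1}} = \left(\frac{n}{d}\right)^{1/d}\lambda^{\frac{1}{d(d-1)}}\,\eps^{-\frac{d-2}{d(d-1)}}.
\]

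Substituting this into the expression for $a\eps$ and collecting exponents gives the following. The exponent of $\eps$ is $\frac{d-2}{d-1}\bigl(1-\frac1d\bigr)=\frac{d-2}{d}$, and the exponent of $\lambda$ is $-\frac{1}{d-1}\bigl(1-\frac1d\bigr)=-\frac1d$. Hence
\[
a\eps = d^{1-1/d}\left(\frac{n}{\lambda}\right)^{1/d}\eps^{\frac{d-2}{d}}.
\]

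Finally, $a\eps>1$ holds if and only if
\[
\eps^{\frac{d-2}{d}} > d^{-\frac{d-1}{d}}\left(\frac{\lambda}{n}\right)^{1/d}.
\]
Since $d\geq 3$, the exponent $\frac{d}{d-2}$ is positive, so raising both sides to it is monotone. This shows the inequality is equivalent to
\[
\eps > d^{-\frac{d-1}{d-2}}\left(\frac{\lambda}{n}\right)^{\frac{1}{d-2}},
\]
which is exactly assumption \eqref{eq:eps}.

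There is no real obstacle here. The only care needed is the exponent bookkeeping in the substitution step, and noting that $d\geq 3$ is required for the final monotonicity step; for $d=2$ the exponent $\frac{1}{d-2}$ is undefined.
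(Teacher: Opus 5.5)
Your proof is correct and follows essentially the same route as the paper. Both eliminate $N$ using the definition of $n$ and reduce $a\eps>1$ to the assumed lower bound on $\eps$ by exponent bookkeeping. The only difference is order: the paper first rewrites $a\eps>1$ as a bound on $\eps$ in terms of $N$ and then substitutes, while you substitute first to get the closed form $a\eps = d^{1-1/d}(n/\lambda)^{1/d}\eps^{(d-2)/d}$. Your explicit ``if and only if'' chain, with positivity and $d\geq 3$ noted, also makes clear that the needed direction (assumption implies $a\eps>1$) holds, which the paper's chain, written from the conclusion towards the hypothesis, leaves implicit.
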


\begin{proof}
    
    Combining Equation \ref{eq:a} and $a\varepsilon>1$ gives:

    \begin{align*}
        d\left(\frac{N}{\lambda \varepsilon}\right)^{\frac{1}{d-1}} \varepsilon &> 1\\
        d \, N^{\frac{1}{d-1}} \lambda^{-\frac{1}{d-1}} \varepsilon^{\,1 - \frac{1}{d-1}} &> 1\\
        d \, N^{\frac{1}{d-1}} \lambda^{-\frac{1}{d-1}} \varepsilon^{\frac{d-2}{d-1}} &> 1 \\
        \varepsilon^{\frac{d-2}{d-1}} &> \frac{1}{d}\,\lambda^{\frac{1}{d-1}} N^{-\frac{1}{d-1}}
    \end{align*}
    And so:
    \begin{equation}
                \varepsilon > \frac{1}{d}^{\frac{d-1}{d-2}}\,\left(\frac{\lambda}{N}\right)^{\frac{1}{d-2}} \label{eq:epsboundN}
    \end{equation}

We now solve for $\left(\frac{\lambda}{N}\right)^{\frac{1}{d-2}} $ in Equation \ref{eq:n}

\begin{align*}
    n =& d \, N^{1+\frac{1}{d-1}} \, \varepsilon^{1-\frac{1}{d-1}} \, \lambda^{-\frac{1}{d-1}}\\
    n =& d \, N^{\frac{d}{d-1}} \, \varepsilon^{\frac{d-2}{d-1}} \, \lambda^{-\frac{1}{d-1}}\\
    N^{\frac{d}{d-1}} =& \frac{n}{d} \, \varepsilon^{-\frac{d-2}{d-1}} \, \lambda^{\frac{1}{d-1}} \\
N =& \left(\frac{n}{d}\right)^{\frac{d-1}{d}} \, \varepsilon^{-\frac{d-2}{d}} \, \lambda^{\frac{1}{d}}\\
\frac{\lambda}{N} =& \lambda^{\frac{d-1}{d}} \left(\frac{n}{d}\right)^{-\frac{d-1}{d}} \varepsilon^{\frac{d-2}{d}}\\
\left(\frac{\lambda}{N}\right)^{\frac{1}{d-2}}
=& \lambda^{\frac{d-1}{d(d-2)}} 
\left(\frac{n}{d}\right)^{-\frac{d-1}{d(d-2)}} 
\varepsilon^{\frac{1}{d}}
\end{align*}

Substituting this into Equation \ref{eq:epsboundN} gives:

\begin{align*}
\varepsilon & > d^{-\frac{d-1}{d-2}} \,
\lambda^{\frac{d-1}{d(d-2)}} \,
\left(\frac{n}{d}\right)^{-\frac{d-1}{d(d-2)}} \,
\varepsilon^{\frac{1}{d}}\\
\varepsilon^{\frac{d-1}{d}} &>
d^{-\frac{d-1}{d-2}} \,
\lambda^{\frac{d-1}{d(d-2)}} \,
\left(\frac{n}{d}\right)^{-\frac{d-1}{d(d-2)}}\\
\varepsilon &>
d^{-\frac{d-1}{d-2}} \,
\left(\frac{\lambda}{n}\right)^{\frac{1}{d-2}}
\end{align*}    
\end{proof}

\section{Generalisations}
\label{app:generalisations}

We briefly note that we believe that it is possible to extend our lower and upper bounds to any $d$-dimensional regular tiling under the shortest path metric.
Whilst we will not provide a formal proof, nor a conclusive theorem, we observe that any tiling provides the necessary properties for our lower bound. 
Specifically, we observe that it is possible to define $r$-diagonals in any regular tiling (we elaborate slightly in the next subsection).

For our upper bound, we note that in any regular $d$-dimensional tiling, that for any ball $B(q, \Delta)$ around a centre $q$ with radius $\Delta$, there are at most $O(\Delta^{d-1})$ edges in a regular tiling that have an endpoint incident to the boundary of $B(q, \Delta)$.
Thus, we obtain the same upper bound of having at most $O(\Delta^{d-1} \lambda)$ switching cells per column in our free-space matrix and an upper bound algorithm with the same asymptotic complexity can follow from this argument.

\subsection{Generalisation of lower bound}
For dimensions $d \geq 5$, the only regular tessellation is the $d$-dimensional grid, sometimes referred to as the \emph{cubic honeycomb}~\cite{coxeter1973regular}, and so for dimensions $d \geq 5$ our results are complete. In the plane, there exist three regular tilings: triangular, square, and hexagonal. A hexagonal tiling can be embedded in a triangular tiling; we therefore only consider the hexagonal tiling. To embed the hardness construction into the hexagonal tiling, we would need to argue (1) that paths in $\mathbb{R}^2$ can be embedded into the planar dimensional hexagonal grid, without distorting distances too much (As done in \Cref{lem:1Dhardness} and \Cref{lem:1Dhardness}), and (2) there exist diagonals $D_r^d$ and $-D_r^d$ such that all points in $D_r^d$ are equidistant to points on the diagonal $D_r^d$.  \Cref{fig:hexagonalDiagonals} illustrates how such diagonals can be embedded in the planar hexagonal tiling. 

The diameter of a planar hexagonal cell is $2$. Embedding two planar curves $P$ and $Q$ into a planar hexagonal tilings can therefore be done without distorting the Fréchet distance by more than 2.

\begin{figure}[b]
    \centering
    \includegraphics{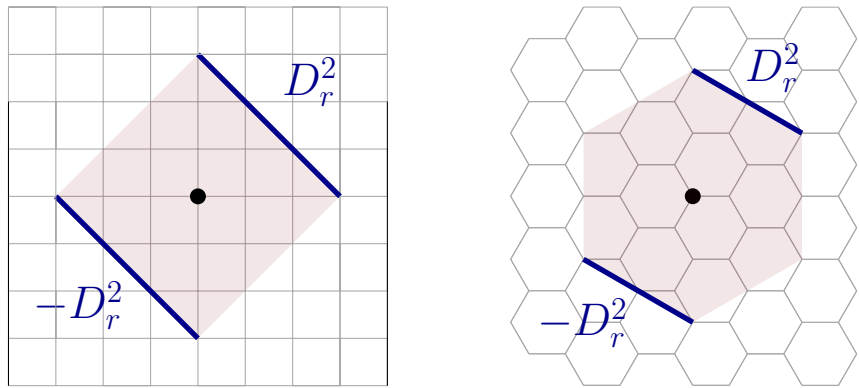}
    \caption{Diagonals in the planar grid graph and similar diagonals in hexagonal tilings.}
    \label{fig:hexagonalDiagonals}
\end{figure}

\section{Continuous proof}\label{app:continuous}
\thmContinuous*
\begin{proof}
    We will show that under the $\ell_1$-norm (resp. $\ell_\infty$-norm) we can transform the paths $P$ and $Q$ in such a way, that the discrete and continuous distances perfectly coincide.

    Classically, the continuous Fréchet distance is realized by one of two events \cite{alt1995computing}: Either a vertex-vertex distance, or a so-called monotonicity event defined by two vertices $p$ and $q$, and an edge $e$, where the monotonicity event is realized by the point along $e$ for which the distance to $p$ and $q$ is the same. In particular, this is realized by the intersection of $e$ with the bisector of $p$ and $q$. 

    Now, as $p$, $q$ and $e$ are restricted to the grid-graph, $e$ is a unit-edge. Let $s$ and $t$ be the vertices of $e$. Similarly, $p$ and $q$ are vertices of the grid-graph. Suppose neither $s$ nor $t$ are the point $m\in e$ for which $\dist(m,p)=\dist(m,q)$. Then, by the nature of the $\ell_1$-norm (resp. $\ell_\infty$-norm) $m$ must be exactly the mid-point of $e$. Hence we can transform $P$ and $Q$ by bisecting every edge into two equally long edges. The resulting $P$ and $Q$ lie on a grid-graph with half the edge length of that of the original grid-graph, and can after a scaling transform be made part of the grid-graph. By the above discussion the discrete and continuous Fréchet distance coincide, hence \Cref{thm:Upperbound} concludes the proof.
\end{proof}

\end{document}